%% file: main.tex
\documentclass{article}

\input{macros}
\usepackage{centernot} 
\usepackage{mathtools}
\usepackage{coffeestains}

\def\thetahat{\hat{\theta}}

\def\A{\mathcal{A}}

\mathtoolsset{showonlyrefs}

\title{\vspace{-1cm}Conformal Risk Control for Non-Monotonic Losses}
\author{Anastasios N.\ Angelopoulos}
\date{Arena\\\texttt{anastasios@arena.ai}\\~\\ \today}

\begin{document}
\maketitle

\begin{abstract}
    Conformal risk control is an extension of conformal prediction for controlling risk functions beyond miscoverage.
    The original algorithm controls the expected value of a loss that is monotonic in a one-dimensional parameter.
    Here, we present risk control guarantees for generic algorithms applied to possibly non-monotonic losses with multidimensional parameters. 
    The guarantees depend on the stability of the algorithm---unstable algorithms have looser guarantees.
    We give applications of this technique to selective image classification, FDR and IOU control of tumor segmentations, and multigroup debiasing of recidivism predictions across overlapping race and sex groups using empirical risk minimization.
\end{abstract}

\section{Introduction}

Consider a sequence of random variables $D_{1:n+1} = ((X_1, Y_1), \ldots, (X_{n+1}, Y_{n+1}))$ representing feature-label pairs with an exchangeable joint distribution.
Let the first $n$ datapoints, $D_{1:n}$, be a known calibration set, and  last datapoint, $(X_{n+1},Y_{n+1})$, represent a test datapoint whose label is unknown.
Also define a bounded loss, $\ell(x,y;\theta) \in [0,1]$, which is a function of a datapoint $(x,y)$ and a parameter $\theta \in \R^d$.
Our goal is to select a parameter value $\thetahat$ using the calibration data $D_{1:n}$ to bound the expected loss on the test datapoint:
\begin{equation}
    \label{eq:intro-guarantee}
    \E\left[ \ell(X_{n+1}, Y_{n+1}; \thetahat) \right] \leq \alpha, 
\end{equation}
where $\alpha \in [0,1]$ is user-specified and the expectation is taken with respect to all $n+1$ datapoints.
We call~\eqref{eq:intro-guarantee} a risk control guarantee.

Conformal risk control, a generalization of conformal prediction~\cite{gammerman1998learning, vovk1999machine, vovk2005algorithmic,lei2013conformal,lei2018distribution} as developed in~\cite{angelopoulos2024conformal}, handles the case where $d=1$ and $\ell$ is monotonically nonincreasing in $\theta$.
Conformal risk control works by setting the parameter $\hat\theta$ to be
\begin{equation}
    \hat\theta = \inf\left\{ \theta : \frac{1}{n+1}\sum_{i=1}^{n}\ell(X_i,Y_i;\theta) \leq \alpha - \frac{1}{n+1} \right\},
\end{equation}
which is the smallest value of $\theta$ such that the empirical risk calculated on $n+1$ datapoints is certain to be below $\alpha$. This provides the guarantee in~\eqref{eq:intro-guarantee} for monotonic losses, but can fail arbitrarily badly for non-monotonic losses, as shown in Proposition 1 of~\cite{angelopoulos2024conformal}.

In this paper, we give risk control guarantees for non-monotonic losses.
The key insight is that the population risk of $\thetahat$ depends on the stability of the algorithm for choosing $\thetahat$---i.e., the change in risk when the test datapoint is added to or removed from the calibration dataset.
More formally, let $\A$ be an algorithm mapping datasets to choices of $\theta$ and let $D_{-i}$ denote $D_{1:n+1}$ with the $i$th element removed.
The parameter $\thetahat$ is the output of $\A(D_{1:n})$.
We say an algorithm $\A$ is $\beta$-stable with respect to a reference algorithm $\A^*$ and $\ell$ if
\begin{equation}
        \E\left[\frac{1}{n+1}\sum\limits_{i=1}^{n+1} \ell(X_{i}, Y_{i}; \A(D_{-i}))\right] 
        \leq \E\left[\frac{1}{n+1}\sum\limits_{i=1}^{n+1} \ell(X_{i}, Y_{i}; \A^*(D_{1:n+1}))\right] + \beta.
\end{equation}
The main theorem demonstrates that if an algorithm is stable with respect to a reference algorithm that controls the risk when run on the full data, then the original algorithm also controls the risk.
Here and throughout the paper, we assume the algorithms are symmetric (i.e. permutation-invariant).
\begin{theorem}
    \label{thm:main-intro}
    Assume $\A$ is symmetric and $\beta$-stable with respect to $\A^*$, that $D_{1:n+1}$ is exchangeable, and that
\begin{equation}
        \E\left[\ell(X_{n+1}, Y_{n+1}; \A^*(D_{1:n+1})) \right] \leq \alpha-\beta.
\end{equation}
    Then
    \begin{equation}
        \E\left[\ell(X_{n+1}, Y_{n+1}; \A(D_{1:n})) \right] \leq \alpha.
    \end{equation}
\end{theorem}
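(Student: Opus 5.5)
The plan is to rewrite the test risk of $\A(D_{1:n})$ as a leave-one-out average using exchangeability. Then $\beta$-stability bounds that average by the average loss of the full-data reference algorithm $\A^*(D_{1:n+1})$. Finally, exchangeability again identifies that average with the reference risk assumed to be at most $\alpha-\beta$.

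\textbf{Step 1 (leave-one-out identity).} Fix $i\in\{1,\ldots,n+1\}$ and let $\pi$ be the transposition swapping $i$ and $n+1$. By exchangeability, $D_{\pi(1:n+1)}$ has the same joint law as $D_{1:n+1}$. Under this swap the pair $\big((X_{n+1},Y_{n+1}), D_{1:n}\big)$ is mapped to $\big((X_i,Y_i), D_{-i}\big)$, where $D_{-i}$ is listed in some order. Because $\A$ is symmetric, $\A(D_{-i})$ does not depend on that ordering. Hence
\begin{equation}
\E\left[\ell(X_{n+1},Y_{n+1};\A(D_{1:n}))\right]=\E\left[\ell(X_i,Y_i;\A(D_{-i}))\right]\quad\text{for every } i.
\end{equation}
Averaging over $i$ shows that the test risk equals $\E\big[\frac{1}{n+1}\sum_{i=1}^{n+1}\ell(X_i,Y_i;\A(D_{-i}))\big]$.

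\textbf{Step 2 (stability).} The $\beta$-stability assumption bounds this leave-one-out average by
\begin{equation}
\E\left[\frac{1}{n+1}\sum_{i=1}^{n+1}\ell(X_i,Y_i;\A^*(D_{1:n+1}))\right]+\beta .
\end{equation}

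\textbf{Step 3 (symmetry of the reference algorithm).} As stated in the paper, all algorithms are symmetric, so $\A^*(D_{1:n+1})$ is unchanged when the full dataset is permuted. Applying the transposition argument of Step 1 gives
\begin{equation}
\E\left[\ell(X_i,Y_i;\A^*(D_{1:n+1}))\right]=\E\left[\ell(X_{n+1},Y_{n+1};\A^*(D_{1:n+1}))\right]\quad\text{for every } i.
\end{equation}
The average in Step 2 therefore equals the reference risk, which is at most $\alpha-\beta$ by hypothesis. Combining the three steps, the test risk is at most $(\alpha-\beta)+\beta=\alpha$.

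The only delicate point is Step 1. One must check that the transposition really carries the joint object $\big(\text{test point}, \A(\text{remaining points})\big)$ to $\big((X_i,Y_i),\A(D_{-i})\big)$. This uses exchangeability of the joint law together with permutation invariance of $\A$, so that the ordering inside $D_{-i}$ is irrelevant. Boundedness of $\ell$ in $[0,1]$ guarantees that all expectations exist.
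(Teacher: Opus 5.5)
Your proposal is correct and follows the paper's own proof. Exchangeability and symmetry turn both leave-one-out averages into test-point risks, and $\beta$-stability together with the hypothesis then gives the bound; your explicit transposition argument, and your remark that $\A^*$ must also be symmetric (covered by the paper's standing assumption that all algorithms are symmetric), just spell out what the paper compresses into ``by exchangeability and symmetry.''
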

Theorem~\ref{thm:main-intro} gives us an actionable workflow for producing risk-control algorithms.
First, we try to identify a reference algorithm, $\A^*$, that controls the risk when applied on the full dataset of $n+1$ datapoints.
Then, we try to approximate that algorithm via $\A$, which runs only on $n$ datapoints, and prove that the difference in risks between these two algorithms is low.
The proof of the main result is below. 
\begin{proof}
By exchangeability and symmetry,
\[
\mathbb{E}\!\left[\frac{1}{n+1}\sum_{i=1}^{n+1}\ell(X_i,Y_i;\mathcal{A}^*(D_{1:n+1}))\right]
= \mathbb{E}\!\left[\ell(X_{n+1},Y_{n+1};\mathcal{A}^*(D_{1:n+1}))\right],
\]
and
\[
\mathbb{E}\!\left[\frac{1}{n+1}\sum_{i=1}^{n+1}\ell(X_i,Y_i;\A(D_{-i}))\right]
= \mathbb{E}\!\left[\ell(X_{n+1},Y_{n+1};\A(D_{1:n}))\right].
\]
By $\beta$-stability, $\mathbb{E}\!\left[\ell(X_{n+1},Y_{n+1};\A(D_{1:n}))\right] \leq \mathbb{E}\!\left[\ell(X_{n+1},Y_{n+1};\A^*(D_{1:n+1}))\right] + \beta$.
Combining this with the assumption that $\mathbb{E}[\ell(X_{n+1},Y_{n+1};\A^*(D_{1:n+1}))] \leq \alpha-\beta$ yields the result.
\end{proof}
Notice that Theorem~\ref{thm:main-intro} applies to $\theta$ in any space, and that boundedness of $\ell$ was never needed in the proof of Theorem~\ref{thm:main-intro}.
These concerns will come up later in determining stability bounds, building on the literature on algorithmic stability~\cite{kearns1997algorithmic, bousquet2002stability, kutin2002almost, mukherjee2006learning, shalev-shwartz2010learnability, yu2013stability, hardt2016train, yu2017three, feldman2018generalization, bousquet2020sharper, zrnic2023post}; see Chapter 13 of~\cite{shalev-shwartz2014understanding} for a review of this extensive field.
The notion of stability used herein is a form of leave-one-out stability with respect to generic algorithms $\cA$ and $\cA^*$; we will prove guarantees for a few different choices of algorithms depending on the setting.

\subsection{Conformal Risk Control is Stable for Monotonic Losses}

Here we will show that the conformal risk control algorithm is stable for monotonic losses, and thus, satisfies the guarantee in Theorem~\ref{thm:main-intro}.

\begin{proposition}
    \label{prop:crc-stable}
    Let $d=1$, $D$ be any dataset, $\ell$ be nonincreasing in its last argument, 
    \begin{equation}
        \A(D) = \inf\left\{ \theta : \frac{1}{|D|+1}\sum_{(x,y) \in D}\ell(x,y;\theta) \leq \alpha - \frac{1}{|D|+1} \right\}, 
    \end{equation}
    and
    \begin{equation}
        \A^*(D) = \inf\left\{ \theta : \frac{1}{|D|}\sum_{(x,y) \in D}\ell(x,y;\theta) \leq \alpha \right\}.
    \end{equation}
    Then $\A$ is $0$-stable with respect to $\A^*$.
\end{proposition}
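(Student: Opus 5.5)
The plan is to prove a pointwise (almost sure) inequality: for every realization of $D_{1:n+1}$ and every $i \in \{1,\dots,n+1\}$,
\[
\ell(X_i,Y_i;\A(D_{-i})) \leq \ell(X_i,Y_i;\A^*(D_{1:n+1})).
\]
Averaging over $i$ and taking expectations then gives the stability inequality with $\beta = 0$. Exchangeability is not needed for this step; it only enters later through Theorem~\ref{thm:main-intro}. Since $\ell$ is nonincreasing in $\theta$, the pointwise inequality reduces to the parameter comparison $\A^*(D_{1:n+1}) \leq \A(D_{-i})$ for every $i$. That comparison is the heart of the argument.

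To prove the comparison, fix $i$ and write $\theta' = \A(D_{-i})$. Note that $|D_{-i}| = n$, so the normalization in $\A$ is $1/(n+1)$. Take any $\theta$ in the feasible set defining $\A(D_{-i})$, i.e.
\[
\frac{1}{n+1}\sum_{j \neq i}\ell(X_j,Y_j;\theta) \leq \alpha - \frac{1}{n+1}.
\]
Because $\ell(X_i,Y_i;\theta) \leq 1$, adding the $i$th term gives
\[
\frac{1}{n+1}\sum_{j=1}^{n+1}\ell(X_j,Y_j;\theta) \leq \alpha .
\]
This is exactly the feasibility condition for $\A^*(D_{1:n+1})$, since $|D_{1:n+1}| = n+1$. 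So the feasible set of $\A(D_{-i})$ is contained in the feasible set of $\A^*(D_{1:n+1})$. Taking infima over nested sets yields $\A^*(D_{1:n+1}) \leq \A(D_{-i})$.

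Comparing feasible sets directly, rather than plugging in $\theta'$ itself, is deliberate. It means I never need $\theta'$ to be attained, so no right-continuity assumption on $\ell$ is required.

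The only step needing care is the boundary convention. When the feasible set for $\A(D_{-i})$ is empty, $\theta' = +\infty$ and the inclusion is vacuous, so $\A^*(D_{1:n+1}) \leq +\infty$ trivially. I will adopt the usual convention from conformal risk control that $\ell(\cdot,\cdot;\theta)$ is extended to $\theta = \pm\infty$ by monotone limits. With that convention, monotonicity still gives the pointwise loss inequality at the extended values.

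Given the comparison, monotonicity yields the pointwise inequality. Summing over $i$, dividing by $n+1$, and taking expectations establishes $0$-stability of $\A$ with respect to $\A^*$. Symmetry of both algorithms is immediate, since each depends on the data only through a sum.
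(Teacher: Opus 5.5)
Your proof is correct and follows the paper's argument. Boundedness of $\ell$ shows that $\mathcal{A}^*(D_{1:n+1}) \le \mathcal{A}(D_{-i})$, and monotonicity then gives the pointwise loss comparison that averages to $0$-stability; your feasible-set inclusion is a slight refinement, since the paper plugs $\mathcal{A}(D_{-i})$ itself into the constraint and so tacitly assumes the infimum is attained (e.g., via right-continuity of $\ell$ in $\theta$), which your version does not need.
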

\begin{proof}
    The boundedness of $\ell$ and definition of $\A$ imply that for all $i$,
    \begin{equation}
        \frac{1}{n+1}\sum_{j=1}^{n+1}\ell(X_j,Y_j;\A(D_{-i})) \leq \frac{1}{n+1}\sum_{j \in [n+1] \setminus i} \ell(X_j,Y_j;\A(D_{-i})) + \frac{1}{n+1} \leq \alpha.
    \end{equation}
    $\A(D_{-i})$ therefore satisfies the constraint in the definition of $\A^*(D_{1:n+1})$, and the monotonicity of the loss gives $\A(D_{-i}) \geq \A^*(D_{1:n+1})$. Therefore, $\ell(X_{i}, Y_{i}; \A(D_{-i})) \leq \ell(X_{i}, Y_{i}; \A^*(D_{1:n+1}))$, and
    \begin{equation}
        \E\left[\frac{1}{n+1}\sum\limits_{i=1}^{n+1} \Big(\ell(X_{i}, Y_{i}; \A(D_{-i})) - \ell(X_{i}, Y_{i}; \A^*(D_{1:n+1}))\Big)\right] \leq 0.
    \end{equation}
    Rearranging terms gives $0$-stability.
\end{proof}

Combining Proposition~\ref{prop:crc-stable} with Theorem~\ref{thm:main-intro} exactly recovers the classical conformal risk control guarantee.
\begin{corollary}
    The conformal risk control algorithm $\A$ defined in Proposition~\ref{prop:crc-stable} satisfies
    \begin{equation}
        \E\left[ \ell(X_{n+1},Y_{n+1};\A(D_{1:n})) \right] \leq \alpha.
    \end{equation}
\end{corollary}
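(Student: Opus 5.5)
The corollary follows by applying Theorem~\ref{thm:main-intro} with $\A$ and $\A^*$ as defined in Proposition~\ref{prop:crc-stable} and $\beta = 0$. Proposition~\ref{prop:crc-stable} already supplies the $0$-stability of $\A$ with respect to $\A^*$. Both algorithms are symmetric, since each depends on $D$ only through the empirical sum $\sum_{(x,y)\in D}\ell(x,y;\theta)$, which is permutation-invariant. Exchangeability is a standing assumption. So the only hypothesis left to check is the reference-algorithm condition
\begin{equation}
\E\left[\ell(X_{n+1},Y_{n+1};\A^*(D_{1:n+1}))\right] \leq \alpha .
\end{equation}

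To check it, I would first use exchangeability together with the symmetry of $\A^*$ (the same step as in the proof of Theorem~\ref{thm:main-intro}). This rewrites the left-hand side as
\begin{equation}
\E\left[\frac{1}{n+1}\sum_{i=1}^{n+1}\ell(X_i,Y_i;\A^*(D_{1:n+1}))\right].
\end{equation}
It then suffices to show that, pointwise, the empirical risk on all $n+1$ points evaluated at $\thetahat^* = \A^*(D_{1:n+1})$ is at most $\alpha$. By definition, $\thetahat^*$ is the infimum of the set of $\theta$ whose full-data empirical risk is at most $\alpha$. Because $\ell$ is nonincreasing in $\theta$, this set is an up-ray, so every $\theta > \thetahat^*$ lies in it.

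The main obstacle is passing to the infimum itself, which requires the constraint set to be closed, i.e.\ some form of right-continuity of $\theta \mapsto \ell(x,y;\theta)$. I would follow the original conformal risk control paper and assume $\ell$ is right-continuous in $\theta$, the standard convention. Then the empirical risk is right-continuous and nonincreasing, so its value at the infimum is the limit from the right and is still at most $\alpha$. I would also handle the degenerate case where the set is empty or the infimum is $\pm\infty$. The usual convention is that $\ell(\cdot;\theta)\to 0$ as $\theta$ grows (for example $\ell(x,y;\theta_{\max}) \le \alpha$), which guarantees that the set is nonempty and that $\A$ and $\A^*$ are well defined.

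With the condition established, Theorem~\ref{thm:main-intro} with $\beta=0$ yields $\E[\ell(X_{n+1},Y_{n+1};\A(D_{1:n}))] \leq \alpha$.
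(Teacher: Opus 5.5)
Your proposal is correct and follows the paper's route: $0$-stability from Proposition~\ref{prop:crc-stable}, then Theorem~\ref{thm:main-intro} with $\beta=0$. The paper's two-line proof never checks the hypothesis $\E[\ell(X_{n+1},Y_{n+1};\A^*(D_{1:n+1}))]\le\alpha$, so your verification via exchangeability and right-continuity is a genuine addition, because right-continuity is actually needed: for $\ell(z;\theta)=\ind{\theta\le c}$ and $1/(n+1)\le\alpha<1$, both $\A(D_{1:n})$ and $\A^*(D_{1:n+1})$ equal $c$ and the test loss is $1$, so the corollary itself fails without it; your only slip is that $\ell(\cdot;\theta_{\max})\le\alpha$ makes $\A^*$'s constraint set nonempty but not $\A$'s, which is fixed by the original convention $\A(D)=\theta_{\max}$ when that set is empty, and under this convention the monotonicity argument of Proposition~\ref{prop:crc-stable} still goes through.
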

\begin{proof}
    For monotonic losses, Proposition~\ref{prop:crc-stable} shows that the conformal risk control algorithm $\mathcal{A}$ is $0$-stable. Applying Theorem~\ref{thm:main-intro} with $\beta=0$ recovers the guarantee.
\end{proof}

The key takeaway is that the proof of conformal risk control's validity has two parts, which can be decoupled: (1) proving monotonicity implies stability, and (2) proving that stability implies risk control.
Theorem~\ref{thm:main-intro} tells us that (2) holds for general algorithms.
The remainder of the paper will be dedicated to replacing step (1) with other stable algorithms, in the context of non-monotonic losses.
Looking forward, Section~\ref{sec:methods} gives stability bounds for classes of non-monotonic losses and shows how to estimate these stability bounds for use in Theorem~\ref{thm:main-intro}. Section~\ref{sec:experiments} shows experimental validation of these methods.

\section{Methods}
\label{sec:methods}

\subsection{Setup and Notation}
\label{sec:setup}

Let $\cX$ be the feature space, $\cY$ be the label space, and $\cZ = \cX \times \cY$.
Datasets are ordered vectors of feature-label pairs, denoted $D \in \cZ^{*}$, where the notation $\cZ^*$ refers to the space of any-length sequences with elements in $\cZ$.
A dataset $D$ can have repeated entries; the size of the dataset is denoted as $|D|$.
To refer to the dataset of random variables used throughout this paper, we will use the notation $D_{1:k} = ((X_1,Y_1), \ldots, (X_{k},Y_{k})) = (Z_1, \ldots, Z_{k})$.
Furthermore, we use $D_{-i}$ to denote the dataset with the $i$th element removed.

We will let $\hat{R}_{D}(\theta) = \tfrac{1}{|D|}\sum_{(x,y) \in D}\ell(x,y;\theta)$ denote the empirical risk, with the shorthand $\hat{R}_{1:k}(\theta)=\hat{R}_{D_{1:k}}(\theta)$ and $\hat{R}_{-i}(\theta)=\hat{R}_{D_{-i}}(\theta)$. 
We also denote the population risk as $R(\theta)=\E[\ell(X_{n+1},Y_{n+1};\theta)]$.
Also, we will let $\thetahat_k = \A(D_{1:k})$ and $\thetahat_{-i} = \A(D_{-i})$ for all $k$ and $i$.
When $\A$ is $\beta$-stable with respect to $\A$ and $\ell$ is clear from context, we will simply say it is $\beta$-stable.
We will also use the notation $[k] = \{ 1, \ldots, k \}$ for any positive integer $k$.

\subsection{Stable Algorithms}

This section describes three stable classes of risk control algorithms.
We begin with the case of general bounded losses, which we handle via discretization.
Next, we give a stronger result for continuous, Lipschitz losses with a strong crossing point at $\alpha$.
Finally, we perform a narrow case study with explicit constants: selective classification, otherwise known as classification with abstention.
By abstaining from prediction on the hardest examples, selective classification allows us to improve a model's accuracy on the event that it issues a prediction.
Selective classification is one of the most important applications of non-monotonic conformal risk control due to its ubiquitous practical utility in safe automation.
Throughout this section, we will set $d=1$.

\subsubsection{General Bounded Losses}
\label{sec:general-bounded}
To handle general losses bounded in $[0,1]$, we will use algorithms that discretize $\Theta$ to impose stability.
First, let $\A(D) = \inf\{\theta \in \Theta_m = \{0, \tfrac{1}{m}, \tfrac{2}{m}, \ldots, 1\} : \hat{R}_D(\theta) \leq \alpha\}$ for $\Theta = [0,1]$ and any positive integer $m$.
This is simply the discretized version of the previously studied algorithms.
We will assume nothing about the loss other than the existence of a safe solution, $\ell(z;1)=0$ for all $z \in \cX\times\cY$.
\begin{proposition}
\label{prop:general-stability}
Let $D_{1:n+1}$ be sampled i.i.d., $\Theta_m=\{0,\tfrac1m,\ldots,1\}$, $\ell(\cdot;1)=0$, and define
\begin{equation}
\A(D)\ :=\ \inf\bigl\{\theta\in\Theta_m:\ \hat R_D(\theta)\le \alpha\bigr\}.
\end{equation}
Then
\begin{equation}
\E\bigl[R(\thetahat_{n})\bigr]\ \leq\ \alpha+\frac{1}{2\sqrt{n}}\left(\sqrt{-W_{-1} \left( -\frac{1}{4n(m+1)^2}\right)}+\sqrt{-\frac{1}{W_{-1} \left( -\frac{1}{4n(m+1)^2}\right)}}\right) = \alpha + \tilde{\mathcal{O}}\left(\frac{1}{\sqrt{n}}\right),
\end{equation}
where $W_{-1}$ is the $-1$st branch of the Lambert W function and $\tilde{\mathcal{O}}$ denotes the growth rate excluding logarithmic factors.
\end{proposition}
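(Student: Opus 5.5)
The plan is to bypass the stability machinery of Theorem~\ref{thm:main-intro} and bound the excess risk directly by a uniform deviation over the finite grid $\Theta_m$. This works because the grid has only $m+1$ points and the data are i.i.d.

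\textbf{Step 1 (feasibility).} Since $\ell(\cdot;1)=0$, the point $\theta=1\in\Theta_m$ always satisfies $\hat R_{1:n}(1)=0\le\alpha$. Hence the infimum defining $\thetahat_n$ is a minimum over a nonempty finite set, and $\hat R_{1:n}(\thetahat_n)\le\alpha$ holds deterministically.

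\textbf{Step 2 (uniform deviation).} Because $\thetahat_n\in\Theta_m$, we can write
\begin{equation}
R(\thetahat_n)\ \le\ \hat R_{1:n}(\thetahat_n) + \max_{\theta\in\Theta_m}\bigl(R(\theta)-\hat R_{1:n}(\theta)\bigr)\ \le\ \alpha + M,
\end{equation}
where $M:=\max_{\theta\in\Theta_m}(R(\theta)-\hat R_{1:n}(\theta))$. It therefore suffices to show that $\E[M]$ is at most the stated Lambert-$W$ expression.

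\textbf{Step 3 (tail bound).} For each fixed $\theta$ the losses $\ell(Z_i;\theta)\in[0,1]$ are i.i.d., so Hoeffding's inequality gives $P(R(\theta)-\hat R_{1:n}(\theta)>s)\le e^{-2ns^2}$. A union bound over the $m+1$ grid points gives $P(M>s)\le \min\{1,(m+1)e^{-2ns^2}\}$.

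\textbf{Step 4 (integration).} For any $t>0$,
\begin{equation}
\E[M]\le \E[M_+]=\int_0^\infty P(M>s)\,ds\ \le\ t + (m+1)\int_t^\infty e^{-2ns^2}\,ds\ \le\ t + \frac{(m+1)e^{-2nt^2}}{4nt},
\end{equation}
where the last step uses the Gaussian tail estimate $\int_t^\infty e^{-cs^2}ds\le e^{-ct^2}/(2ct)$.

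\textbf{Step 5 (choice of $t$).} This is the step I expect to be the fiddliest, though it is purely algebraic. The threshold $t$ should nearly balance the two terms, and the balancing equation is transcendental, of the form $(-2nt^2\cdot\mathrm{const})\,e^{-2nt^2\cdot\mathrm{const}} = -\tfrac{1}{4n(m+1)^2}$. It is solved by the $-1$ branch of the Lambert $W$ function, which is the branch giving the large root, i.e.\ $t\to0$ slowly.

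I would first try $t=\sqrt{-W_{-1}(x)}/(2\sqrt n)$ with $x=-\tfrac{1}{4n(m+1)^2}$. Then I would use the identity $e^{W}=x/W$ to rewrite $(m+1)e^{-2nt^2}$ in terms of $\sqrt{-W}$. If the tail term does not come out as exactly $\tfrac{1}{2\sqrt n\sqrt{-W}}$, I would adjust the constant in $t$, or use a slightly different Gaussian tail bound, until it does; the second term should reduce to $\tfrac{1}{2\sqrt n}\sqrt{-1/W_{-1}(x)}$.

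Finally, $-W_{-1}(x)\sim\log(1/|x|)=\log(4n(m+1)^2)$ as $x\to0^-$. Hence the first term is $\mathcal{O}(\sqrt{\log(nm)/n})$ and the second is $o(1/\sqrt n)$, which gives the $\alpha+\tilde{\mathcal O}(1/\sqrt n)$ rate.
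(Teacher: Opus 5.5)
Your proof is correct, and its core is the paper's: Hoeffding at each of the $m+1$ grid points, a union bound, then a Lambert-$W$ choice of threshold. The one step that differs is how you turn the tail bound into an expectation bound, and you do it more sharply.

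The paper uses the crude truncation $\E[R(\thetahat_n)]\le\alpha+\epsilon+\P(R(\thetahat_n)>\alpha+\epsilon)$, which relies on $R\le 1$. It then union-bounds the events $\{R(\theta)>\alpha+\epsilon,\ \hat R_{-i}(\theta)\le\alpha\}$ to reach $\alpha+\epsilon+(m+1)e^{-2n\epsilon^2}$. The stated Lambert-$W$ expression is exactly this function at its minimizer, where $(m+1)e^{-2n\epsilon^2}=1/(4n\epsilon)$.

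You instead bound the uniform deviation $M$ and integrate its tail, which gives the smaller quantity $t+(m+1)e^{-2nt^2}/(4nt)$. So the transcendental equation you describe in Step 5 is the first-order condition of the paper's cruder bound, not of yours. Evaluating your bound at $t=\sqrt{\ln(m+1)/(2n)}$ gives $\sqrt{\ln(m+1)/(2n)}+1/(2\sqrt{2n\ln(m+1)})=\mathcal{O}(\sqrt{\log(m+1)/n})$. This removes the $\sqrt{\log n}$ factor hidden in $-W_{-1}(x)\ge\ln(4n(m+1)^2)$. Your route therefore buys a strictly better rate, while the paper's buys a closed-form optimizer that matches the stated expression.

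To finish Step 5 as stated, do not adjust $t$. With $t=\sqrt{-W_{-1}(x)}/(2\sqrt n)$, the identity $e^{W}=x/W$ gives $(m+1)e^{-2nt^2}=\frac{1}{2\sqrt n\sqrt{-W_{-1}(x)}}$. Your tail term is therefore $\frac{1}{4n(-W_{-1}(x))}$. This is at most the stated second term, because $4nt=2\sqrt{n}\sqrt{-W_{-1}(x)}\ge 2$: recall that $W_{-1}\le -1$ on $[-1/e,0)$ and that $|x|\le 1/16$. Changing the constant in $t$ would only perturb the first term. Since the claim is an upper bound, the inequality you already have suffices.
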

This proposition tells us that we can achieve risk control for general losses with the discretized algorithm up to a dominating factor of $\tilde{\mathcal{O}}\left(\tfrac{1}{\sqrt{n}}\right)$, since $W_{-1}$ is of logarithmic order.

\subsubsection{Continuous, Lipschitz Losses}
Next, we study the same root-finding algorithm, $\A(D) = \inf\{\theta \in \R : \hat{R}_D(\theta) \leq \alpha\}$, on losses that satisfy regularity conditions.
If the root is leave-one-out stable, meaning that removing a single datapoint does not drastically change the location of the root, then for smooth losses, the algorithm will control the risk.

\begin{proposition}
    \label{prop:stability-smooth}
    Let $d=1$, $\A(D) = \inf\{\theta \in \R : \hat{R}_D(\theta) \leq \alpha\}$, and $D_{1:n+1}$ be exchangeable.
    Let $\ell$ be continuous and $L$-Lipschitz in $\theta$, and assume that $\thetahat_{n+1}$ is almost surely finite.
    Furthermore, assume there exist $m,r>0$ such that
    \begin{multline}
        \label{eq:stable_root}
        \hat{R}_{1:n+1}(\theta) \leq \alpha - m(\theta - \thetahat_{n+1}) \; \; \forall \theta \in [\thetahat_{n+1}, \thetahat_{n+1} + r ], \quad \hat{R}_{1:n+1}(\theta) \geq \alpha + m(\thetahat_{n+1} - \theta) \; \; \forall \theta \in [\thetahat_{n+1} - r, \thetahat_{n+1} ],\\
        \text{ and } \quad \hat{R}_{1:n+1}(\theta) \geq \alpha + mr \; \; \forall \theta \in [-\infty, \thetahat_{n+1}-r)      
    \end{multline}
    almost surely. Then if $\tfrac{1}{n+1} < mr$, $\A$ is $\tfrac{L}{m(n+1)}$-stable with respect to $\A^*\equiv\A$.
\end{proposition}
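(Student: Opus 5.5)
The idea is to control, pointwise on each realization, how far the leave-one-out root $\thetahat_{-i}$ can move from the full-data root $\thetahat_{n+1}$. The Lipschitz property then converts this into a bound on the loss difference. With $\A^*\equiv\A$, $\beta$-stability means
\[
\E\Big[\tfrac{1}{n+1}\textstyle\sum_i \big(\ell(Z_i;\thetahat_{-i})-\ell(Z_i;\thetahat_{n+1})\big)\Big]\le \tfrac{L}{m(n+1)}.
\]
It therefore suffices to show that almost surely $|\thetahat_{-i}-\thetahat_{n+1}|\le \tfrac{1}{m(n+1)}$ for every $i$. Then each summand is at most $L\,|\thetahat_{-i}-\thetahat_{n+1}|\le \tfrac{L}{m(n+1)}$, and averaging and taking expectations gives the claim.

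The key algebraic link between the two empirical risks is
\[
\hat R_{1:n+1}(\theta)=\tfrac{n}{n+1}\hat R_{-i}(\theta)+\tfrac{1}{n+1}\ell(Z_i;\theta).
\]
Because $\ell\in[0,1]$, this gives $\hat R_{-i}(\theta)\le \tfrac{n+1}{n}\hat R_{1:n+1}(\theta)$ and $\hat R_{-i}(\theta)\ge \tfrac{n+1}{n}\hat R_{1:n+1}(\theta)-\tfrac1n$. More usefully, it shows that $\hat R_{-i}(\theta)-\alpha$ and $\hat R_{1:n+1}(\theta)-\alpha$ differ by at most roughly $1/(n+1)$ in the relevant scaling.

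To make the constants come out as $\tfrac{1}{m(n+1)}$, I would work with the sign of $\hat R_{-i}-\alpha$ through the identity
\[
\tfrac{n}{n+1}(\hat R_{-i}(\theta)-\alpha)=\hat R_{1:n+1}(\theta)-\alpha-\tfrac{1}{n+1}(\ell(Z_i;\theta)-\alpha).
\]
Since $\ell(Z_i;\theta)-\alpha\in[-\alpha,1-\alpha]\subseteq[-1,1]$, the perturbation is at most $\tfrac{1}{n+1}$ in absolute value.

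The argument then has two steps.

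\textbf{Upper bound.} Take $\theta=\thetahat_{n+1}+\delta$ with $\delta=\tfrac{1}{m(n+1)}<r$; the inequality $\delta<r$ is exactly the hypothesis $\tfrac{1}{n+1}<mr$. The first condition in \eqref{eq:stable_root} gives $\hat R_{1:n+1}(\theta)\le\alpha-m\delta=\alpha-\tfrac1{n+1}$. The identity then yields $\hat R_{-i}(\theta)\le\alpha$, so $\thetahat_{-i}\le\thetahat_{n+1}+\delta$.

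\textbf{Lower bound.} For any $\theta<\thetahat_{n+1}-\delta$, either $\theta\in[\thetahat_{n+1}-r,\thetahat_{n+1}-\delta)$ or $\theta<\thetahat_{n+1}-r$. In the first case the second condition gives $\hat R_{1:n+1}(\theta)\ge\alpha+m(\thetahat_{n+1}-\theta)>\alpha+\tfrac1{n+1}$. In the second case the third condition gives $\hat R_{1:n+1}(\theta)\ge\alpha+mr>\alpha+\tfrac1{n+1}$. Either way $\hat R_{-i}(\theta)>\alpha$, so no such $\theta$ lies in the feasible set, and $\thetahat_{-i}\ge\thetahat_{n+1}-\delta$. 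Continuity of $\ell$ ensures the infimum is attained or approached correctly, so the bound holds for the infimum itself.

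I expect the main obstacle to be the boundary bookkeeping. In the lower bound I need strict inequality, not just $\ge$, on the open region, so that the infimum, which might be approached from below, cannot equal a point less than $\thetahat_{n+1}-\delta$. The perturbation term is bounded by $\tfrac{1}{n+1}\max(\alpha,1-\alpha)\le\tfrac1{n+1}$, and there is slack exactly in the strict inequality $m(\thetahat_{n+1}-\theta)>\tfrac1{n+1}$. If equality cases cause trouble, I would instead argue with $\delta'=\delta+\epsilon$ and let $\epsilon\downarrow0$, using continuity.

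Finally, the infimum over $\R$ could a priori be $-\infty$. The third condition rules this out, since it keeps $\hat R_{-i}>\alpha$ on all of $(-\infty,\thetahat_{n+1}-r)$. Hence each $\thetahat_{-i}$ is finite, and the Lipschitz bound applies.
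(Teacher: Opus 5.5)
Your proposal is correct and follows essentially the same route as the paper. Both arguments use the $\tfrac{1}{n+1}$ leave-one-out perturbation of the empirical risk together with the three root conditions to get, almost surely, $|\hat\theta_{-i}-\hat\theta_{n+1}|\le \frac{1}{m(n+1)}$ for every $i$, and then finish with the Lipschitz bound and averaging. The only cosmetic difference is that you exhibit the feasible point $\hat\theta_{n+1}+\delta$ directly, where the paper invokes the intermediate value theorem to find a root in $I$; as your argument shows, continuity is not actually needed to bound the infimum.
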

This result says that if the empirical risk satisfies:
\begin{enumerate}
    \item local linearity with sufficient slope at its leftmost crossing of $\alpha$, and
    \item does not come too close to $\alpha$ after the leftmost crossing,
\end{enumerate}
then the risk is controlled.
Intuitively, the risk cannot touch $\alpha$ in two highly disjoint regions, otherwise the selected parameter may be unstable.
\begin{proof}
    Set $\epsilon:=1/(n+1)$. 
    First, observe the following leave-one-out bound:
    \begin{equation}
        \label{eq:loo-bound}
        \sup_{\theta\in\Theta}\,|\hat{R}_{-i}(\theta)-\hat{R}_{1:n+1}(\theta)|\ \leq \epsilon.
    \end{equation}
    Applying the assumptions in~\eqref{eq:stable_root} gives
    \begin{equation}
        \hat{R}_{1:n+1}\left(\hat\theta_{n+1}-\delta\right) \geq \alpha + m\delta \quad \text{ and } \quad
        \hat{R}_{1:n+1}\left(\hat\theta_{n+1}+\delta\right) \leq \alpha-m\delta \qquad \forall\delta\in(0,r].
    \end{equation}
    Combining the above display applied with $\delta=\epsilon/m$ and the leave-one-out bound gives
    \begin{equation}
        \hat{R}_{-i}\left(\hat\theta_{n+1}-\delta\right) \geq \alpha + m\tfrac{\epsilon}{m} - \epsilon = \alpha, \quad \text{ and } \quad \hat{R}_{-i}\left(\hat\theta_{n+1}+\delta\right) \leq \alpha-m\tfrac{\epsilon}{m}+\epsilon=\alpha,
    \end{equation}
    implying that $\alpha \in \left[ \hat{R}_{-i}\!\left(\hat\theta_{n+1}\pm \tfrac{\epsilon}{m}\right)\right]$.
    Therefore, by continuity, $\hat{R}_{-i}(\theta)=\alpha$ has a solution in
    \[
    I:=\Big[\hat\theta_{n+1}-\tfrac{\epsilon}{m},\ \hat\theta_{n+1}+\tfrac{\epsilon}{m}\Big].
    \]
    Next, for any $\theta\in [\hat\theta_{n+1}-r, \hat\theta_{n+1}-\epsilon/m)$~\eqref{eq:stable_root} combines with~\eqref{eq:loo-bound} to yield
    \begin{equation}
        \hat{R}_{-i}(\theta) \geq \alpha + m(\hat\theta_{n+1}-\theta)-\epsilon\ > \alpha,
    \end{equation}
    so there is no root of $\hat{R}_{-i}(\theta)=\alpha$ in that interval. For $\theta < \hat\theta_{n+1} - r$, the second assumption in~\eqref{eq:stable_root} combines with~\eqref{eq:loo-bound} to give
    \begin{equation}
        \hat{R}_{-i}(\theta) \geq \alpha + mr - \epsilon > \alpha,
    \end{equation}
    so there is no root to the left of $\hat\theta_{n+1}-r$ either. Hence the leftmost root of $\hat{R}_{-i}(\theta)-\alpha$ lies in $I$, and therefore
    \begin{equation}
        \big|\thetahat_{-i}-\thetahat_{n+1}\big|\ \le\ \frac{\epsilon}{m}\ =\ \frac{1}{m(n+1)}.
    \end{equation}
    Finally, since $\ell(\cdot;\theta)$ is $L$-Lipschitz in $\theta$,
    \[
    \ell\!\left(Z_i;\thetahat_{-i}\right)
    \le \ell\!\left(Z_i;\thetahat_{n+1}\right)
    + L\,\big|\thetahat_{-i}-\thetahat_{n+1}\big|
    \le \ell\!\left(Z_i;\thetahat_{n+1}\right)+\frac{L}{m(n+1)}.
    \]
    Averaging over $i=1,\dots,n+1$ and taking expectations on both sides completes the proof.
\end{proof}

Combining the previous result with Theorem~\ref{thm:main-intro} gives a risk-control result.
\begin{corollary}
    Under the same conditions as Proposition~\ref{prop:stability-smooth}, $\A$ satisfies
    \begin{equation}
        \E\left[ R(\thetahat_{n}) \right] \leq \alpha + \frac{L}{m(n+1)}.
    \end{equation}
\end{corollary}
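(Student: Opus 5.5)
The corollary follows by applying Theorem~\ref{thm:main-intro} with $\A^*\equiv\A$ and $\beta = \tfrac{L}{m(n+1)}$, the stability constant from Proposition~\ref{prop:stability-smooth}. The one extra ingredient is control of the reference term $\E[\ell(X_{n+1},Y_{n+1};\A(D_{1:n+1}))]$. Rather than assume it is at most $\alpha-\beta$, we bound it directly. By exchangeability of $D_{1:n+1}$ and symmetry of $\A$,
\begin{equation}
\E\bigl[\ell(X_{n+1},Y_{n+1};\thetahat_{n+1})\bigr] = \E\Bigl[\tfrac{1}{n+1}\textstyle\sum_{i=1}^{n+1}\ell(X_i,Y_i;\thetahat_{n+1})\Bigr] = \E\bigl[\hat R_{1:n+1}(\thetahat_{n+1})\bigr].
\end{equation}

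Next we show $\hat R_{1:n+1}(\thetahat_{n+1}) = \alpha$ almost surely. Since $\ell$ is continuous in $\theta$, so is $\hat R_{1:n+1}$. Because $\thetahat_{n+1}$ is almost surely finite and is the infimum of the set $\{\theta : \hat R_{1:n+1}(\theta)\le\alpha\}$, that set is closed and nonempty, so the infimum is attained and $\hat R_{1:n+1}(\thetahat_{n+1})\le\alpha$. The second condition in~\eqref{eq:stable_root}, evaluated at $\theta=\thetahat_{n+1}$, gives $\hat R_{1:n+1}(\thetahat_{n+1})\ge\alpha$. Either way the reference risk is at most $\alpha$.

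Now rerun the three lines of the proof of Theorem~\ref{thm:main-intro} with this bound in place of its hypothesis. The leave-one-out average of $\A$ has the same expectation as $\E[\ell(X_{n+1},Y_{n+1};\A(D_{1:n}))] = \E[R(\thetahat_n)]$, using exchangeability and the fact that $\thetahat_n$ depends only on $D_{1:n}$. Proposition~\ref{prop:stability-smooth} bounds this by the reference term plus $\tfrac{L}{m(n+1)}$. The reference term is at most $\alpha$ by the previous paragraph. Together,
\begin{equation}
\E\bigl[R(\thetahat_n)\bigr] \le \alpha + \frac{L}{m(n+1)}.
\end{equation}
Equivalently, this is Theorem~\ref{thm:main-intro} applied with target level $\alpha+\beta$ in place of $\alpha$.

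The only step needing care is the identity $\hat R_{1:n+1}(\thetahat_{n+1})\le\alpha$. It depends on the infimum being attained, which uses continuity of $\ell$ and finiteness of $\thetahat_{n+1}$. Both are hypotheses of the corollary, so no step should present a real obstacle.
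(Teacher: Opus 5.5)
Your proposal is correct and is the argument the paper leaves implicit (``combining the previous result with Theorem~\ref{thm:main-intro}''): take $\A^*\equiv\A$ and $\beta=\tfrac{L}{m(n+1)}$ from Proposition~\ref{prop:stability-smooth}, use $\E[\ell(Z_{n+1};\thetahat_{n+1})]=\E[\hat R_{1:n+1}(\thetahat_{n+1})]\le\alpha$ (by exchangeability, symmetry, and attainment of the root, which also follows directly from the first condition in~\eqref{eq:stable_root} at $\theta=\thetahat_{n+1}$), and apply Theorem~\ref{thm:main-intro} at level $\alpha+\beta$. One small caveat: equating $\E[\ell(Z_{n+1};\thetahat_n)]$ with $\E[R(\thetahat_n)]$ requires $Z_{n+1}$ to be independent of $D_{1:n}$, not just exchangeable, so under mere exchangeability the bound should be read as a bound on $\E[\ell(X_{n+1},Y_{n+1};\thetahat_n)]$; the paper's statement has the same looseness.
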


\subsubsection{Selective Classification}

Finally, we give stability guarantees for selective classification.
Let $\hat{Y}_i$ represent a prediction of $Y_i$ and $\hat{P}_i \in [0,1]$ represent a confidence in this prediction. Both $\hat{Y}_i = \hat{y}(X_i)$ and $\hat{P}_i = \hat{p}(X_i)$ are outputs of deterministic models run on $X_i$. We seek a confidence threshold $\thetahat$ satisfying
\begin{equation}
    \P(\hat{Y}_{n+1} \neq Y_{n+1} \mid \hat{P}_{n+1} > \thetahat) \leq \alpha.
\end{equation}
As described in~\cite{angelopoulos2025learn}, this is equivalent to asking that
\begin{equation}
    \E\left[ \ind{\hat{Y}_{n+1} \neq Y_{n+1} \text{ and } \hat{P}_{n+1} > \thetahat} - \alpha \ind{\hat{P}_{n+1} > \thetahat} + \alpha \right] = \E\left[ \ell(Z_{n+1};\thetahat) \right] \leq \alpha,
\end{equation}
where $\ell(x,y;\theta) = \ind{\hat{y}(x) \neq y \text{ and } \hat{p}(x) > \theta} - \alpha \ind{\hat{p}(x) > \theta} + \alpha$.
Notably, $\ell$ is a non-monotonic loss.
We will study the below algorithm $\A$ for selective classification,
\begin{equation}
    \label{eq:leftmost-root}
    \A(D) = \inf\left\{ \theta : \hat{R}_{D}(\theta) \leq \alpha \right\},
\end{equation}
which simply chooses the smallest value of $\theta$ that controls the risk.

First, we characterize the stability of this selective classification algorithm.
Let the $\hat P_i$ be distinct almost surely.
Let the vector $V$ denote the inverse sorting permutation of $\hat P$, so that $\hat{P}_i = \hat{P}_{(V_i)}$, where the notation $\hat{P}_{(j)}$ represents the $j$th smallest value of $\hat{P}$.
Further, let $V^==\{V_i : \hat{Y}_{i} = Y_{i} \}$, $V^{\neq} =\{V_i : \hat{Y}_{i} \neq Y_{i} \}$.
The core idea is that each loss is piecewise constant, with one component corresponding to when the algorithm is abstaining and another when it issues a prediction.
Thus it only has one change point, exactly at the location of $\hat{P}_i$:
\begin{align}
    \label{eq:selective-eq-neq}
    \ell(Z_i ; \theta) = 1 - (1-\alpha)\ind{\theta \geq \hat{P}_i} \qquad & \forall i \in V^{\neq}, \text{ and }\\
    \ell(Z_i ; \theta) = \alpha\ind{\theta \geq \hat{P}_i} \qquad & \forall i \in V^=.
\end{align}
Because the losses are right-continuous and piecewise constant with jumps at $\hat{P}$, we are guaranteed that if $D$ is a subdataset of $D_{1:n+1}$, then $\A(D)=\hat{P}_{(j)}$ for some $j$. 
Along these lines, define $\hat{\jmath}_{n+1}$ and $\hat{\jmath}_{-i}$ as
\begin{equation}
    \hat{P}_{(\hat{\jmath}_{n+1})} = \thetahat_{n+1} \text{ and } \hat{P}_{(\hat{\jmath}_{-i})} = \thetahat_{-i}.
\end{equation}
We will use this index-space characterization to bound the stability.

\begin{proposition}
    \label{prop:stability-selective}
    The selective classification algorithm in~\eqref{eq:leftmost-root} is $\beta$-stable with
    \begin{equation}
        \beta = \frac{2\max\{\alpha, 1-\alpha\}\E[K]}{n+1}, \text{ where } K = \max_i|\hat{\jmath}_{-i} - \hat{\jmath}_{n+1}|.
    \end{equation}
\end{proposition}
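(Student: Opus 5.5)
We take $\A^*\equiv\A$ and bound, pathwise, the leave-one-out gap
\[
\Delta := \frac{1}{n+1}\sum_{i=1}^{n+1}\Bigl(\ell(Z_i;\thetahat_{-i}) - \ell(Z_i;\thetahat_{n+1})\Bigr).
\]
Taking expectations then gives $\beta$-stability with the stated $\beta$. The approach is to work in index space. Since every $\thetahat_{-i}$ and $\thetahat_{n+1}$ is some order statistic $\hat P_{(j)}$, the term for datapoint $i$ can be nonzero only if the change point $\hat P_i$ lies between $\thetahat_{-i}$ and $\thetahat_{n+1}$ in the sense of~\eqref{eq:selective-eq-neq}. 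That is, the indicator $\ind{\theta\ge \hat P_i}$ must take different values at $\theta=\thetahat_{-i}$ and $\theta=\thetahat_{n+1}$. In rank terms, $V_i$ must lie in the half-open window between $\hat\jmath_{-i}$ and $\hat\jmath_{n+1}$, namely $\min(\hat\jmath_{-i},\hat\jmath_{n+1}) \le V_i < \max(\hat\jmath_{-i},\hat\jmath_{n+1})$, up to the exact endpoint convention. When the indicator does flip, the loss changes by exactly $1-\alpha$ if $i\in V^{\neq}$ and by exactly $\alpha$ if $i\in V^=$. So each term is at most $\max\{\alpha,1-\alpha\}$ in absolute value.

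The key step is counting how many $i$ can have a flipped indicator. The window for datapoint $i$ is contained in the rank interval $W:=[\hat\jmath_{n+1}-K,\ \hat\jmath_{n+1}+K]$, because $|\hat\jmath_{-i}-\hat\jmath_{n+1}|\le K$ by definition of $K$. The ranks $V_i$ are distinct since the $\hat P_i$ are a.s.\ distinct. Each $i$ with a nonzero term therefore has its rank $V_i$ in $W$ minus at most one endpoint. This leaves at most $2K$ candidate ranks, hence at most $2K$ indices $i$. Summing gives
\[
\Delta \le \frac{2K\max\{\alpha,1-\alpha\}}{n+1}.
\]
Taking expectations yields $\E[\text{LOO risk}] \le \E[\hat R_{1:n+1}(\thetahat_{n+1})] + \beta$, which is the claimed stability.

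The step needing most care is the endpoint bookkeeping. Removing $Z_i$ re-indexes the order statistics of $D_{-i}$, so I will fix the convention that $\hat\jmath_{-i}$ is measured in the ranking of the full sample $D_{1:n+1}$, as in the definition $\hat P_{(\hat\jmath_{-i})}=\thetahat_{-i}$. I will then check carefully that when $\thetahat_{-i}=\hat P_{(j)}$ with $j=V_i$ itself, the indicator behaves as claimed under the right-continuous form $\ind{\theta\ge\hat P_i}$. If that check loses an endpoint, the window could contain up to $2K+1$ ranks rather than $2K$. In that case I would first try to rule out the extra rank, using the fact that $\hat P_{(\hat\jmath_{n+1})}$ itself is the rank where the indicator is already switched on at $\theta=\thetahat_{n+1}$. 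Only failing that would I absorb the extra rank into the constant. A cruder fallback that avoids the endpoint analysis is to bound each term by $\max\{\alpha,1-\alpha\}$ and count ranks within distance $K$ on either side; this already gives a bound of the form $2K\max\{\alpha,1-\alpha\}$.
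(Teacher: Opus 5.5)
Your proposal is correct and is essentially the paper's proof: take $\A^*=\A$, bound each nonzero summand by $\max\{\alpha,1-\alpha\}$, and count the indices whose indicator flips. The count uses the fact that the distinct ranks $V_i$ must fall in a half-open window of length at most $K$ on one side or the other of $\hat\jmath_{n+1}$ (the paper's sets $S_-$ and $S_+$, each of size at most $K$).

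Your endpoint worry resolves itself. Since $\ind{\hat P_{(j)}\ge \hat P_i}=\ind{j\ge V_i}$, a flip occurs exactly when $\min(\hat\jmath_{-i},\hat\jmath_{n+1})<V_i\le\max(\hat\jmath_{-i},\hat\jmath_{n+1})$, so the window is open on the left rather than the right as you wrote; the count is the same either way. Also, $\hat\jmath_{-i}=V_i$ never happens, because $\hat P_i$ is not a jump point of $\hat R_{-i}$. So every flipped rank lies in $\{\hat\jmath_{n+1}-K+1,\ldots,\hat\jmath_{n+1}+K\}$, which has exactly $2K$ elements. By contrast, your ``cruder fallback'' that counts the whole distance-$K$ neighborhood would give $2K+1$, not $2K$.
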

\begin{proof}
    By~\eqref{eq:selective-eq-neq}, it suffices to prove that, almost surely,
    \begin{equation}
        \sum\limits_{i=1}^{n+1} \Big(\ell(X_{i}, Y_{i}; \A(D_{-i})) - \ell(X_{i}, Y_{i}; \A^*(D_{1:n+1}))\Big) = \sum_{i\in [n+1]} w_i\Delta_i
        \leq 2\max\{\alpha, 1-\alpha\}K,
    \end{equation}
    where $\Delta_i = \ind{\A(D_{-i}) \geq \hat P_i} - \ind{\A^*(D_{1:n+1}) \geq \hat P_i}$ and $w_i = \alpha$ for $i \in V^{=}$ and $-(1-\alpha)$ otherwise.
    We can begin by rewriting the $\Delta_i$ terms as 
    \begin{align}
        \Delta_i &= \ind{\A(D_{-i}) \geq \hat{P}_i > \A(D_{1:n+1})} - \ind{\A(D_{-i}) < \hat{P}_i \leq \A(D_{1:n+1})} \\
        &= \ind{\hat{\jmath}_{-i} \leq V_i < \hat{\jmath}_{n+1}} - \ind{\hat{\jmath}_{-i} > V_i \geq \hat{\jmath}_{n+1}}.
    \end{align}
    Note that the two events inside the indicators are mutually exclusive, and correspond to the ``crossing'' of $\A(D_{-i})$ over $\hat{P}_i$ in the downward or upward directions.
    Therefore,
    \begin{equation}
        \label{eq:pf-selective-cauchy-schwarz}
        \sum_{i\in [n+1]} w_i\Delta_i \leq \max\{\alpha, 1-\alpha\} \sum_{i\in [n+1]}|\Delta_i| \leq \max\{\alpha, 1-\alpha\}\left|\{i : \hat{\jmath}_{-i} \leq V_i < \hat{\jmath}_{n+1} \text{ or } \hat{\jmath}_{-i} > V_i \geq \hat{\jmath}_{n+1}\} \right|.
    \end{equation}
    Define the index sets
\[
S_- \;:=\; \bigl\{i\in[n{+}1]:\ \hat{\jmath}_{-i}\le V_i<\hat{\jmath}_{n+1}\bigr\}
\quad\text{and}\quad
S_+ \;:=\; \bigl\{i\in[n{+}1]:\ \hat{\jmath}_{-i}> V_i\ge \hat{\jmath}_{n+1}\bigr\}.
\]
Then the set of indices that contribute nonzero terms to $\sum_{i\in[n+1]}|\Delta_i|$ is $S:=S_-\cup S_+$, and $S_-\cap S_+=\varnothing$ by definition.

By the definition of $K=\max_i|\hat{\jmath}_{-i}-\hat{\jmath}_{n+1}|$, for every $i\in S_-$ we have
\[
1 \le \hat{\jmath}_{n+1}-V_i \le \hat{\jmath}_{n+1}-\hat{\jmath}_{-i}\le |\hat{\jmath}_{-i}-\hat{\jmath}_{n+1}|\le K,
\]
hence
\[
V_i\in\{\hat{\jmath}_{n+1}-K,\ldots,\hat{\jmath}_{n+1}-1\}\cap\{1,\ldots,n{+}1\}.
\]
Similarly, for every $i\in S_+$ we have $\hat{\jmath}_{n+1}\le V_i\le \hat{\jmath}_{-i}-1\le \hat{\jmath}_{n+1}+K-1$, i.e.
\[
V_i\in\{\hat{\jmath}_{n+1},\ldots,\hat{\jmath}_{n+1}+K-1\}\cap\{1,\ldots,n{+}1\}.
\]
Because the $V_i$’s are distinct (the map $i\mapsto V_i$ is a permutation of $\{1,\ldots,n{+}1\}$), at most one index $i$ can correspond to each admissible value of $V_i$. Therefore,
\[
|S_-|\le K\qquad\text{and}\qquad |S_+|\le K,
\]
which, since $|\Delta_i| \leq 1$ for all $i$, yields
\[
\sum_{i\in[n+1]}|\Delta_i| \leq |S| \leq |S_-|+|S_+| \leq 2K.
\]
Combining this with~\eqref{eq:pf-selective-cauchy-schwarz} yields
\[
\sum_{i\in[n+1]} w_i\Delta_i \;\le\; \max\{\alpha, 1-\alpha\}\,|S| \;\le\; 2\max\{\alpha, 1-\alpha\}K,
\]
as claimed.
\end{proof}

The above result gives a simple, distribution-free characterization of the stability of the selective classification algorithm.
It is then relatively easy to estimate $\E[K]$ with the bootstrap---i.e., drawing samples from the empirical distribution, computing $K$ on these samples, and taking the average.
When combined with Theorem~\ref{thm:main-intro}, this stability result implies a selective accuracy bound.
\begin{corollary}
    Let $D_{1:n+1}$ be exchangeable. Then the selective classification algorithm in~\eqref{eq:leftmost-root} satisfies
    \begin{equation}
        \P(\hat{Y}_{n+1} = Y_{n+1} \mid \hat{P}_{n+1} > \thetahat_n) \geq 1-\alpha-\frac{2\max\{\alpha, 1-\alpha\}\E[K]}{n+1}.
    \end{equation}
\end{corollary}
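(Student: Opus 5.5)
The plan is to combine Theorem~\ref{thm:main-intro} with Proposition~\ref{prop:stability-selective} (taking $\A^*\equiv\A$), and then convert the resulting risk bound into the conditional statement. Throughout, write $E=\{\hat P_{n+1}>\thetahat_n\}$ and $\mathrm{Err}=\{\hat Y_{n+1}\neq Y_{n+1}\}$.

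\textbf{Step 1: the reference algorithm controls the risk on the full data.} By right-continuity of the piecewise-constant empirical risk, the infimum defining $\thetahat_{n+1}$ is attained, so $\hat R_{1:n+1}(\thetahat_{n+1})\le\alpha$ holds pointwise. Since $\A$ is symmetric and $D_{1:n+1}$ is exchangeable,
\[
\E\bigl[\ell(Z_{n+1};\thetahat_{n+1})\bigr]=\E\bigl[\hat R_{1:n+1}(\thetahat_{n+1})\bigr]\le\alpha .
\]

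\textbf{Step 2: invoke the main theorem.} Let $\beta=2\max\{\alpha,1-\alpha\}\E[K]/(n+1)$. Proposition~\ref{prop:stability-selective} says $\A$ is $\beta$-stable with respect to $\A$. Apply Theorem~\ref{thm:main-intro} with target level $\alpha+\beta$; its hypothesis is exactly Step~1. Since $\ell(Z_{n+1};\theta)-\alpha=\ind{\mathrm{Err}\cap\{\hat P_{n+1}>\theta\}}-\alpha\ind{\hat P_{n+1}>\theta}$, the conclusion reads
\[
\P(\mathrm{Err}\cap E)-\alpha\,\P(E)\le\beta .
\]

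\textbf{Step 3: pass to the conditional probability.} This is where I expect the main obstacle. Dividing the display in Step~2 by $\P(E)$ only gives $\P(\mathrm{Err}\mid E)\le\alpha+\beta/\P(E)$. The stated bound needs $\beta$ alone, so I need an excess of order $\beta\,\P(E)$ rather than $\beta$.

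\emph{First attempt: change the level inside the loss.} Rerun the argument with the loss at level $\alpha'$, $\ell_{\alpha'}(x,y;\theta)=\ind{\hat y(x)\neq y,\ \hat p(x)>\theta}-\alpha'\ind{\hat p(x)>\theta}+\alpha'$, while keeping the algorithm at level $\alpha$. The proof of Proposition~\ref{prop:stability-selective} goes through verbatim with weights $\alpha'$ and $-(1-\alpha')$. Moreover, $\hat R^{\alpha}_{1:n+1}(\thetahat_{n+1})\le\alpha$ gives
\[
\hat R^{\alpha'}_{1:n+1}(\thetahat_{n+1})-\alpha'\le-(\alpha'-\alpha)\,\hat F,
\]
where $\hat F$ is the fraction of calibration points with $\hat P_i>\thetahat_{n+1}$. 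The extra slack $(\alpha'-\alpha)\hat F$ is the quantity that must absorb the stability cost.

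\emph{Second attempt: sharpen the stability sum.} Refine the bound $\sum_i w_i\Delta_i\le 2\max\{\alpha,1-\alpha\}K$ from Proposition~\ref{prop:stability-selective}. The only nonzero $\Delta_i$ have ranks $V_i$ within $K$ of $\hat\jmath_{n+1}$. I would try to charge each of these crossing terms to points in the predicted region $\{V_i\ge\hat\jmath_{n+1}\}$, so that the averaged sum scales like $K\hat F/(n+1)$ rather than $K/(n+1)$.

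If either attempt succeeds, exchangeability converts $\E[\hat F]$ into $\P(E)$ up to the leave-one-out shift of $\thetahat$, and Step~2 then gives $\P(\mathrm{Err}\cap E)\le(\alpha+\beta)\P(E)$, which is the stated bound. If neither succeeds, what remains rigorous is $\P(\mathrm{Err}\cap E)-\alpha\,\P(E)\le\beta$, which is the paper's risk-form ``equivalent'' guarantee. The literal conditional statement with $\beta$ alone would then hold only under the reading that identifies the two.
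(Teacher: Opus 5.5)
Your Steps 1--2 are correct. They are exactly what ``combining Proposition~\ref{prop:stability-selective} with Theorem~\ref{thm:main-intro}'' delivers, and the paper offers no argument beyond that sentence: $\P(\mathrm{Err}\cap E)\le\alpha\,\P(E)+\beta$, i.e.\ $\P(\hat Y_{n+1}=Y_{n+1}\mid E)\ge 1-\alpha-\beta/\P(E)$. The paper's move to the conditional form relies on the stated equivalence between $\P(\mathrm{Err}\mid E)\le\alpha$ and $\E[\ell(Z_{n+1};\thetahat)]\le\alpha$. That equivalence holds only with zero slack; with slack $\beta$, converting back to the conditional form divides $\beta$ by $\P(E)$. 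So Step~3 is a genuine gap. Neither of your attempts can close it, because the displayed bound with $\beta$ alone is false in general.

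\textbf{Counterexample.} Make errors independent of confidence: $X\sim\mathrm{Unif}[0,1]$, $\hat p(x)=x$, $\hat y\equiv 0$, and $Y\sim\mathrm{Bern}(p)$ independent of $X$, with $p>\alpha$. This is the ``poorly ranked'' case of Figure~\ref{fig:shrinking-interval}. Then $\mathrm{Err}=\{Y_{n+1}=1\}$ is independent of $(X_{n+1},D_{1:n})$, hence of $E$. So $\P(\hat Y_{n+1}=Y_{n+1}\mid E)=1-p$ for every $n$.

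Now list the error indicators in decreasing order of $\hat P$ and set $S_k=\sum_{m\le k}(e_{(m)}-\alpha)$. This is an i.i.d.\ walk with drift $p-\alpha>0$.
\begin{itemize}
\item The full-data algorithm predicts the top $N=\max\{k:S_k\le 0\}$ points.
\item Deleting a correct point shifts all later partial sums by $+\alpha$, so the leave-one-out count satisfies $0\le N'\le N$.
\item Deleting an error shifts them by $-(1-\alpha)$, so $N-1\le N'\le \max(N,L-1)$, where $L=\max\{k:S_k\le 1-\alpha\}\ge N$.
\end{itemize}
Since $|\hat\jmath_{-i}-\hat\jmath_{n+1}|\le |N'-N|+1$, this gives $K\le L+2$. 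Coupling with an infinite walk, $\E[L]\le\sum_{j\ge 1} j\,\P(S_j\le 1-\alpha)$, which Hoeffding bounds uniformly in $n$. Hence $\beta=O(1/n)$, and $1-\alpha-\beta>1-p$ for large $n$: the corollary fails. The risk form survives only because $\P(E)=O(1/n)$ in this example.

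Both of your attempts aim at $\P(\mathrm{Err}\cap E)\le(\alpha+\beta)\P(E)$, which is exactly this false claim. For instance, a stability cost of order $\E[K\hat F]/(n+1)=O(n^{-2})$ would contradict $(p-\alpha)\P(E)\ge (p-\alpha)/(n+1)$. That lower bound holds because $\thetahat_n\le\max_{i\le n}\hat P_i$, so $\P(E)\ge 1/(n+1)$. What is actually provable is your fallback: the risk-form bound, equivalently $\P(\hat Y_{n+1}=Y_{n+1}\mid E)\ge 1-\alpha-\beta/\P(E)$.
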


However, the question remains as to a more explicit characterization of $\E[K]$.
The following proposition gives an answer, saying that the expected value of $K$ is bounded by the number of times the algorithm's running error rate crosses a thin band around $\alpha$, as visualized in Figure~\ref{fig:shrinking-interval}.
\begin{figure}[ht]
    \centering
    \includegraphics[width=0.8\linewidth]{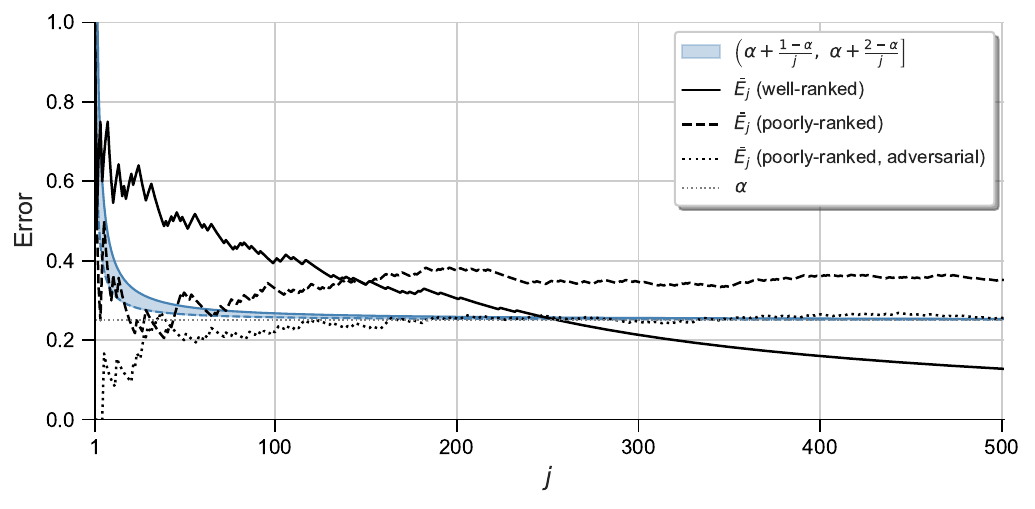}
    \caption{Cumulative average error $\bar{E}_j$ for three scenarios alongside the shrinking interval $(\alpha + (1-\alpha)/j,\; \alpha + (2-\alpha)/j]$ from Proposition~\ref{prop:ub-EK}, with $\alpha = 0.25$ and $n = 500$.
    In each case $E_i \sim \mathrm{Bernoulli}(p_i)$ independently.
    In the well-ranked case, $p_i = \max(0,\; 2\alpha(1 - (i-1)/m))$, $m = (n+1)/2$, a linear decay from $2\alpha$ to $0$ at the midpoint, so that $\bar{E}_j$ crosses below $\alpha$ in the second half and eventually exits the interval.
    In the poorly ranked case, $p_i = 0.35$ for all $i$; the constant rate exceeds $\alpha$, and $\bar{E}_j$ remains above the interval.
    In the poorly ranked adversarial case, $p_i = \alpha$ for all $i$; the average concentrates near $\alpha$ but stays below the interval as it shrinks toward $\alpha$ from above.
    The blue band is the interval; the gray dotted line marks $\alpha$. $K$ is the number of $j$ for which $\bar{E}_j$ lands inside the interval.}
    \label{fig:shrinking-interval}
\end{figure}
\begin{proposition}
    \label{prop:ub-EK}
    In the same setting as Proposition~\ref{prop:stability-selective}, define $\bar{E}_j = \frac{1}{j}\sum_{i \leq j}E_i$, where
    \begin{equation}
        E = \left(\ind{\hat{Y}_{V_1} \neq Y_{V_1}}, \ldots, \ind{\hat{Y}_{V_{n+1}} \neq Y_{V_{n+1}}}\right)
    \end{equation}
    Then
    \begin{equation}
        \E[K] \leq \sum_{j=1}^{n+1} \left[\P\left(\bar{E}_j \in \bigg(\alpha + \frac{1-\alpha}{j}, \alpha + \frac{2-\alpha}{j}\bigg] \right)\right].
    \end{equation}
\end{proposition}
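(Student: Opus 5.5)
The plan is to prove a deterministic inequality, namely that $K$ is at most the number of indices $j$ with $\bar{E}_j$ in the band $\bigl(\alpha+\frac{1-\alpha}{j},\,\alpha+\frac{2-\alpha}{j}\bigr]$. Taking expectations and using linearity then gives the claim, since $\E[\#\{j:\bar{E}_j\in\cdot\}]=\sum_j \P(\bar{E}_j\in\cdot)$.

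\textbf{Step 1: index-space form of the empirical risk.} By~\eqref{eq:selective-eq-neq}, a datapoint on which we abstain ($\theta\ge\hat P_i$) contributes exactly $\alpha$, whatever its correctness. A datapoint on which we predict contributes $1$ if it is an error and $0$ otherwise. Hence $\hat R_D(\theta)\le\alpha$ holds if and only if the number of errors among the predicted points is at most $\alpha$ times the number of predicted points. In other words, the error rate on the predicted block is at most $\alpha$. I will order the points by confidence, as encoded by $V$, so that the predicted block at threshold $\hat P_{(k)}$ is a contiguous block of size $j$. I take the indexing of $E$ to be the one in which this block is $\{1,\dots,j\}$, so that its error rate is $\bar{E}_j$. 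Then $\hat{\jmath}_{n+1}$ corresponds to the largest block size $j$ with $j\bar{E}_j\le\alpha j$, because the leftmost root means predicting as much as possible. All blocks larger than this one violate the constraint, i.e.\ $\bar{E}_j>\alpha$ there.

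\textbf{Step 2: effect of removing one point.} Removing datapoint $i$ deletes one entry of $E$. For each block that contains position $V_i$, this lowers the size by one and the error count by $E_{V_i}\le1$. The leave-one-out constraint for the image of block $j$ therefore reads $j\bar{E}_j-E_{V_i}\le\alpha(j-1)$. Its least permissive version (when $E_{V_i}=1$) is $j\bar{E}_j\le\alpha j+(1-\alpha)$. Blocks not containing $V_i$ are unchanged. Consequently the leave-one-out root can only move toward larger blocks, to some index $j'>\hat{\jmath}_{n+1}$. Every index between the two roots, and the new root itself, must violate the full-data constraint while satisfying a perturbed one. I will try to show that each such index $j$ satisfies
\[
\alpha+\frac{1-\alpha}{j}<\bar{E}_j\le\alpha+\frac{2-\alpha}{j},
\]
possibly after shifting $j$ by one to absorb the off-by-one between $j$ and $j-1$ in the leave-one-out block. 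The conclusion would be that $|\hat{\jmath}_{-i}-\hat{\jmath}_{n+1}|$ is at most the number of band indices. Since this holds uniformly in $i$, it also holds for $K=\max_i|\hat{\jmath}_{-i}-\hat{\jmath}_{n+1}|$.

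\textbf{Step 3: take expectations.} Summing the band indicators over $j\in[n+1]$ and taking expectations finishes the proof.

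\textbf{Main obstacle.} The delicate part is Step~2's bookkeeping. First, one must track exactly how the deletion shifts indices between block size and threshold index, in particular whether the relevant block is $j$ or $j-1$. Second, one must show that the intermediate indices are squeezed into precisely the half-open band stated, not just into a slightly wider band such as $(\alpha,\alpha+\frac{2}{j}]$. I would first fix a single worst-case $i$ (an error point at the boundary) and verify the endpoint inequalities by direct computation. Then I would argue that the indices lying strictly between the two roots cannot contain a full-data feasible block, since that would contradict the choice of $\hat{\jmath}_{n+1}$, and that they cannot sit above the band, since then they would also be infeasible after deletion. If the exact band fails at an endpoint, the fallback is to prove the bound with a slightly enlarged band, which still gives the same qualitative crossing-count interpretation.
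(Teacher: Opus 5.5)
\textbf{The gap is in Step 2, and it cannot be closed: the deterministic inequality you aim for is false, and so is the proposition as stated.}

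Take $n+1=10$ and $\alpha=0.11$. Let the data be a uniformly random permutation of ten fixed points with distinct confidences, whose error indicators, sorted by decreasing confidence, are $(1,0,0,0,0,0,0,0,0,1)$. This is exchangeable, and the sorted pattern is deterministic.

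Every nonempty top block contains the top error, so its error rate is at least $1/9>\alpha$. Hence the full data abstains on everything: $\thetahat_{n+1}=\hat P_{(10)}$ and $\hat\jmath_{n+1}=10$. Deleting the most confident point leaves $(0,\dots,0,1)$. Its top eight points are error-free, while all nine together have error rate $1/9>\alpha$. So $\thetahat_{-i}=\hat P_{(1)}$, $\hat\jmath_{-i}=1$, and $K=9$.

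On the other side, $j\bar E_j-\alpha j$ equals $1-0.11j$ for $j\le 9$ and $0.9$ for $j=10$. So only $j=10$ satisfies $1-\alpha<j\bar E_j-\alpha j\le 2-\alpha$, which is the stated band. The pattern is a palindrome, so this does not depend on the direction in which $E$ is sorted. Hence $\E[K]=9$, while the right-hand side equals $1$.

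Three things go wrong in Step 2.

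\emph{First, the directions are reversed.} Deleting an error relaxes the constraint on the affected blocks to $j\bar E_j\le\alpha j+(1-\alpha)$. Deleting a correct point tightens it to $j\bar E_j\le\alpha j-\alpha$. So the root can also move toward smaller blocks: with $\alpha=0.55$ and sorted errors $(0,1,1,1)$, deleting the first point drops the block size from $2$ to $0$.

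\emph{Second, the band is misplaced.} Your own bookkeeping shows that the blocks made feasible by deleting an error satisfy $0<j\bar E_j-\alpha j\le 1-\alpha$. This is disjoint from the stated band. Passing to $j+1$ lands in the band only when the $(j+1)$st point happens to be an error.

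\emph{Third, and decisively, $K$ is the displacement of an extreme root, not the number of indices whose feasibility flips.} Indices strictly between the two roots need not satisfy any perturbed constraint, because the walk $j\bar E_j-\alpha j$ can climb arbitrarily high in between. Take $\alpha=0.4$ and sorted errors consisting of $m$ ones, then $M$ zeros with $1.5(m-1)\le M<1.5m$, then a final one. The full data abstains everywhere. Deleting the top point makes every remaining point except the last one predicted, so $K=n$. Yet for any fixed $a<b$, the number of $j$ with $\bar E_j\in(\alpha+a/j,\alpha+b/j]$ stays bounded as $m$ grows. So your enlarged-band fallback fails too.

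\emph{Relation to the paper.} The paper's proof takes the same route as yours and makes the same leap at its ``Consequently'' step. It also writes the feasibility criterion as $1+T_j\ge 0$ where $T_j\ge 0$ is correct, which is the source of the extra $1/j$ shift in the band. So the defect is in the statement, not only in your bookkeeping.

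\emph{What does hold.} The largest displacement caused by deleting an error equals the distance from the full-data block size to the last $j$ with $j\bar E_j-\alpha j\le 1-\alpha$. So what one can actually sum are probabilities of last-exit events such as $\{\min_{j\le m\le n+1}(m\bar E_m-\alpha m)\in(0,1-\alpha]\}$, plus an analogous term for deleting correct points. This replaces the probabilities that $\bar E_j$ lies in a band at time $j$.
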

Because in practical settings we use models that tend to have a reasonable ranking, we therefore expect $K$ to be small in practice, as in Figure~\ref{fig:shrinking-interval}.
This intuition can be further formalized by applying a concentration bound like Hoeffding's inequality on $\bar{E}_j$ to bound the probability it lies in $\big(\alpha + \tfrac{1-\alpha}{j}, \alpha + \tfrac{2-\alpha}{j}\big]$ conditionally on the vector $\hat{P}$. The validity of such an argument hinges on the fact that the coordinates of $E$ are independent Bernoulli random variables with different means after conditioning on $\hat{P}$; the result is not too interpretable, so we omit it here.
\begin{proof}
    We keep the same notation as in the proof of Proposition~\ref{prop:stability-selective},
    and notice that 
    \begin{equation}
        K = \max_i |\A'(D_{-i})-\A'(D_{1:n+1})|,
    \end{equation}
    with
    \begin{align}
        \hat{\jmath}_{n+1} = \A'(D_{1:n+1}) &= \max
        \left\{j : \sum_{v \in V^{=}}\alpha \ind{j \leq v} + \sum_{v \in V^{\neq}}(1-(1-\alpha)\ind{j \leq v}) \leq (n+1)\alpha \right\} \\
        &= \max\big\{j \in \{0, \ldots, n+1\} :\;1+T_j\ge 0\big\},
    \end{align}
    where $T_j = \sum_{i \leq j}w_i = -\sum_{i\leq j}E_i+j\alpha$.
    The leave-one-out index $\hat{\jmath}_{-i}=\mathcal{A}'(D_{-i})$ is defined similarly, with $V^=_{-i}$ in place of $V^=$, $V^{\neq}_{-i}$ in place of $V^{\neq}$, and $n$ in place of $n+1$.
    Therefore, $K$ depends on the values of $\hat{P}_1, \ldots, \hat{P}_{n+1}$, only through the error vector $E$ (which itself only depends on their order).
    If we delete the datapoint at index $V_i$, then for all $j\geq V_i$ the new partial sums shift by a constant:
    \begin{equation}
        T^{-i}_{j}=T_{j+1}-\alpha\quad\text{if }E_i=0,
        \qquad
        T^{-i}_{j}=T_{j+1}+(1-\alpha)\quad\text{if }E_i=1.
    \end{equation}
    Hence removing a correct point ($E_i=0$) can only invalidate those $j \geq V_i$ for which $1+T_{j+1}\in[0,\alpha)$, and removing an error ($E_i=1$) can only create feasibility for those $j \geq V_i$ with $1+T_{j+1}\in[-(1-\alpha),0)$. Consequently,
    \begin{equation}
        K \leq \bigl|\{j \in [n] :\ T_j\in[-1,\alpha-1)\}\bigr| + \bigl|\{j \in [n] :\ T_j \in [\alpha-2, -1)\}\bigr| = \bigl|\{j \in [n] :\ T_j\in[\alpha-2,\alpha-1)\}\bigr|
    \end{equation}

    We can bound this quantity in expectation as
    \begin{align}
        \label{eq:EK-ub-1}
        &\E[K] \leq \E\big[\bigl|\{j \in [n] :\ T_j\in[\alpha-2,\alpha-1)\}\bigr|\big] \\
        = &\E\left[\sum\limits_{j=1}^{n+1} \ind{\alpha-2 \leq T_j < \alpha-1}\right] = \sum\limits_{j=1}^{n+1} \P\left(T_j \in [\alpha-2, \alpha-1) \right)=\sum\limits_{j=1}^{n+1}\P\left(\bar{E}_j \in \bigg(\alpha + \frac{1-\alpha}{j}, \alpha + \frac{2-\alpha}{j}\bigg] \right).
    \end{align}
\end{proof}

\subsection{Guarantees for Empirical Risk Minimization}
So far, we have studied root-finding algorithms that select the most extreme point satisfying an inequality.
Here, we will instead study regularized empirical risk minimization (ERM) algorithms: 
\begin{equation}
    \A(D) = \argmin_{\theta \in \R^d} \hat{R}_D(\theta) + \frac{\lambda}{2} \|\theta\|_2^2,
\end{equation}
for some regularization level $\lambda \geq 0$. We will let $\ell : \R^d \to \R$ be potentially unbounded, and $d \geq 1$.

We will show two types of guarantees with respect to ERM.
First, we prove guarantees on the generalization gap: the gap between the risk of the empirical risk minimizer with respect to an oracle quantity, such as the full-data or population risk minimizers.
Second, we prove stability bounds on the first-order optimality condition; i.e., how close the population gradient is to zero.

\subsubsection{Risk Control Guarantees on the Loss}
First, we show the stability of ERM on the scale of the loss.
\begin{proposition}
\label{prop:erm-stable-loss}
Let $\Theta=\R^d$, $\ell$ be convex in $\theta$, and assume there exists $\rho:\cZ\to[0,\infty)$ such that for all $z\in\cZ$ and $\theta,\theta'\in\R^d$,
\begin{equation}
\label{eq:lipschitz-rho-simple}
|\ell(z;\theta)-\ell(z;\theta')|\ \le\ \rho(z)\,\|\theta-\theta'\|_2.
\end{equation}
Then $\A$ is $\beta$-stable with
\begin{equation}
\label{eq:beta-simple}
\beta \leq \frac{2\,\E[\rho(Z_{n+1})^2]}{\lambda(n+1)}.
\end{equation}
\end{proposition}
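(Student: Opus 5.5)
The plan is to take $\A^*\equiv\A$, so that $\A^*(D_{1:n+1})=\thetahat_{n+1}$, and to control the leave-one-out parameter shift $\|\thetahat_{-i}-\thetahat_{n+1}\|_2$ through strong convexity. The regularized objectives are
\[
F(\theta)=\hat R_{1:n+1}(\theta)+\tfrac{\lambda}{2}\|\theta\|_2^2,
\qquad
F_{-i}(\theta)=\hat R_{-i}(\theta)+\tfrac{\lambda}{2}\|\theta\|_2^2 .
\]
Each is $\lambda$-strongly convex, so the minimizers $\thetahat_{n+1}$ and $\thetahat_{-i}$ exist and are unique; this makes $\A$ well defined and symmetric.

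\textbf{Step 1 (parameter shift).} I will work with unnormalized sums to keep the $n$ versus $n+1$ bookkeeping clean. Let
\[
G(\theta)=\sum_{j\le n+1}\ell(Z_j;\theta)+\tfrac{\lambda(n+1)}{2}\|\theta\|_2^2,
\qquad
G_{-i}(\theta)=\sum_{j\ne i}\ell(Z_j;\theta)+\tfrac{\lambda n}{2}\|\theta\|_2^2 .
\]
Their minimizers are $\thetahat_{n+1}$ and $\thetahat_{-i}$ respectively.

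The difference is $G-G_{-i}=\ell(Z_i;\cdot)+\tfrac{\lambda}{2}\|\cdot\|_2^2$, and the regularizer part of this difference is awkward. To avoid it, I will instead compare $G$ with
\[
H_{-i}=G-\ell(Z_i;\cdot),
\]
which is $\lambda(n+1)$-strongly convex. By the scale invariance of the argmin, $H_{-i}$ is minimized at the ERM with regularization $\lambda\frac{n+1}{n}$ on $D_{-i}$. That is not exactly $\thetahat_{-i}$. If this mismatch proves troublesome, I will fall back on the normalized objectives directly.

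The standard argument runs as follows. Adding the strong-convexity inequalities at the two minimizers gives
\[
\lambda n\,\|\thetahat_{-i}-\thetahat_{n+1}\|^2 \le \bigl[G-G_{-i}\bigr](\thetahat_{-i})-\bigl[G-G_{-i}\bigr](\thetahat_{n+1}) .
\]
The regularizer difference $\tfrac{\lambda}{2}\bigl(\|\thetahat_{-i}\|^2-\|\thetahat_{n+1}\|^2\bigr)$ is what still has to be absorbed.

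My preferred route is the normalized form, using
\[
F_{-i}=\tfrac{n+1}{n}\Bigl(F-\tfrac{1}{n+1}\ell(Z_i;\cdot)\Bigr)+\text{const}\cdot\|\theta\|_2^2 \text{ terms}.
\]
Cleaner still is to follow Shalev-Shwartz--Ben-David and take the reference point to be the replace-one or leave-one-out comparison with regularization held at scale $\lambda$. The key inequality is
\[
\tfrac{\lambda}{2}\|\thetahat_{-i}-\thetahat_{n+1}\|^2\cdot 2 \;\le\; \tfrac{1}{n+1}\bigl(\ell(Z_i;\thetahat_{-i})-\ell(Z_i;\thetahat_{n+1})\bigr)+(\text{regularizer mismatch}\le 0\text{ after careful ordering}),
\]
and applying the Lipschitz condition~\eqref{eq:lipschitz-rho-simple} yields
\[
\|\thetahat_{-i}-\thetahat_{n+1}\|_2\le \frac{\rho(Z_i)}{\lambda(n+1)}
\]
up to a factor of $2$.

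\textbf{Step 2 (loss stability).} Applying~\eqref{eq:lipschitz-rho-simple} again gives
\[
\ell(Z_i;\thetahat_{-i})-\ell(Z_i;\thetahat_{n+1})\le \rho(Z_i)\,\|\thetahat_{-i}-\thetahat_{n+1}\|_2\le \frac{2\rho(Z_i)^2}{\lambda(n+1)} .
\]
Averaging over $i$ and taking expectations, exchangeability gives $\E[\rho(Z_i)^2]=\E[\rho(Z_{n+1})^2]$, which is exactly~\eqref{eq:beta-simple}.

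\textbf{Main obstacle.} The delicate step is Step 1: the mismatch between the normalizations $1/n$ and $1/(n+1)$ with a fixed $\lambda$. Removing a point changes the relative weight of the regularizer, not just the data term. I will first try to show that this mismatch term carries a favorable sign or can be absorbed, using optimality of $\thetahat_{-i}$ for $F_{-i}$. Writing
\[
F=\tfrac{n}{n+1}F_{-i}+\tfrac{1}{n+1}\ell(Z_i;\cdot)+\tfrac{\lambda}{2(n+1)}\|\cdot\|^2
\]
shows that the extra term $\tfrac{1}{n+1}\bigl(\ell(Z_i;\cdot)+\tfrac{\lambda}{2}\|\cdot\|^2\bigr)$ is convex. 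Then
\[
\tfrac{\lambda}{2}\|\Delta\|^2\le F(\thetahat_{-i})-F(\thetahat_{n+1})
\quad\text{and}\quad
\tfrac{n}{n+1}\tfrac{\lambda}{2}\|\Delta\|^2\le \tfrac{n}{n+1}\bigl(F_{-i}(\thetahat_{n+1})-F_{-i}(\thetahat_{-i})\bigr).
\]
Adding these leaves $\tfrac{1}{n+1}\bigl(\ell(Z_i;\thetahat_{-i})-\ell(Z_i;\thetahat_{n+1})\bigr)+\tfrac{\lambda}{2(n+1)}\bigl(\|\thetahat_{-i}\|^2-\|\thetahat_{n+1}\|^2\bigr)$ on the right. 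The norm term would then be handled with the first-order optimality of $\thetahat_{n+1}$, or by interpolating the regularizer weight. The factor $2$ in~\eqref{eq:beta-simple} gives some slack for this.
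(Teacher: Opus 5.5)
There is a genuine gap in Step~1, and it carries over into Step~2. The per-point bounds you aim for, $\|\thetahat_{-i}-\thetahat_{n+1}\|_2\le 2\rho(Z_i)/(\lambda(n+1))$ and $\ell(Z_i;\thetahat_{-i})-\ell(Z_i;\thetahat_{n+1})\le 2\rho(Z_i)^2/(\lambda(n+1))$, are false. The regularizer-mismatch term also has no favorable sign.

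Here is a counterexample. Take $d=1$, $\ell(Z_i;\theta)=\epsilon\theta$, and $\ell(Z_j;\theta)=-c\theta$ for all $j\neq i$, with $0<\epsilon<c$. These losses are convex, with $\rho(Z_i)=\epsilon$ and $\rho(Z_j)=c$. Then:
\begin{itemize}
\item $\thetahat_{n+1}=\frac{nc-\epsilon}{\lambda(n+1)}$ and $\thetahat_{-i}=\frac{c}{\lambda}$, so $\thetahat_{-i}-\thetahat_{n+1}=\frac{c+\epsilon}{\lambda(n+1)}$.
\item $\ell(Z_i;\thetahat_{-i})-\ell(Z_i;\thetahat_{n+1})=\frac{\epsilon(c+\epsilon)}{\lambda(n+1)}$. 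This exceeds your bound $\frac{2\epsilon^2}{\lambda(n+1)}$ by the factor $\frac{c+\epsilon}{2\epsilon}$, which is unbounded as $\epsilon\to 0$.
\item $\|\thetahat_{-i}\|_2^2-\|\thetahat_{n+1}\|_2^2>0$, so the term you hoped to absorb is strictly positive.
\end{itemize}
The reason is that deleting $Z_i$ changes the regularizer's weight relative to the remaining data, from $\lambda(n+1)/n$ to $\lambda$. So the shift depends on the Lipschitz constants of all the other points, not only on $\rho(Z_i)$. Even with $\epsilon=0$ the minimizer moves by $c/(\lambda(n+1))$.

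The missing idea is to express the mismatch through the first-order condition at $\thetahat_{n+1}$, and to control the resulting cross terms only after averaging over $i$. The paper's argument runs as follows:
\begin{itemize}
\item Pick $g_j\in\partial_\theta\ell(Z_j;\thetahat_{n+1})$ with $\frac{1}{n+1}\sum_j g_j+\lambda\thetahat_{n+1}=0$.
\item Observe that $s_{-i}=\frac1n\sum_{j\neq i}g_j+\lambda\thetahat_{n+1}=\frac{1}{n(n+1)}\sum_{j\neq i}g_j-\frac{1}{n+1}g_i$ is a subgradient of $F_{-i}$ at $\thetahat_{n+1}$.
\item Use $\lambda$-strong monotonicity of $\partial F_{-i}$ and $\|g_j\|_2\le\rho(Z_j)$ to get $\|\thetahat_{n+1}-\thetahat_{-i}\|_2\le\frac{1}{\lambda(n+1)}\left(\rho(Z_i)+\frac1n\sum_{j\neq i}\rho(Z_j)\right)$.
\item Multiply by $\rho(Z_i)$ and average over $i$. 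The cross terms then satisfy $\frac{1}{n(n+1)}\sum_{i\neq j}\rho(Z_i)\rho(Z_j)\le\frac{1}{n+1}\sum_i\rho(Z_i)^2$ by Cauchy--Schwarz.
\end{itemize}
That last step is where the factor $2$ comes from: one copy from the $\rho(Z_i)^2$ terms and one from the cross terms. It is not per-point slack.

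Your function-value inequality can be repaired in the same spirit. Write $\|\thetahat_{-i}\|_2^2-\|\thetahat_{n+1}\|_2^2=2\langle\thetahat_{n+1},\Delta\rangle+\|\Delta\|_2^2$ with $\Delta=\thetahat_{-i}-\thetahat_{n+1}$, and use $\lambda\|\thetahat_{n+1}\|_2\le\frac{1}{n+1}\sum_j\rho(Z_j)$. This gives $\|\Delta\|_2\le\frac{1}{\lambda n}\left(\rho(Z_i)+\frac{1}{n+1}\sum_j\rho(Z_j)\right)$. After the same averaging it yields only $2\E[\rho(Z_{n+1})^2]/(\lambda n)$, which is slightly weaker than the stated bound. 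The subgradient and strong-monotonicity route avoids this loss.
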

This result can be used in conjunction with Theorem~\ref{thm:main-intro} to give a generalization guarantee on ERM, as below.
\begin{corollary}
    In the setting of Proposition~\ref{prop:erm-stable-loss},
    \begin{equation}
        \E\left[ \ell(X_{n+1}, Y_{n+1}; \thetahat_n) + \tfrac{\lambda}{2}\|\thetahat_n\|_2^2 \right] \leq R^* + \frac{2\,\E[\rho(Z_{n+1})^2]}{\lambda(n+1)},
    \end{equation}
    where $R^* = \min_{\theta \in \Theta} \E[\ell(X_{n+1},Y_{n+1};\theta)+ \tfrac{\lambda}{2}\|\theta\|_2^2]$.
\end{corollary}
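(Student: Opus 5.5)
The plan is to apply Theorem~\ref{thm:main-intro} to the \emph{regularized} loss $\ell_\lambda(z;\theta) := \ell(z;\theta) + \tfrac{\lambda}{2}\|\theta\|_2^2$, taking as reference algorithm $\A^* := \A$ run on the full data $D_{1:n+1}$. Note that $\A$ is exactly the minimizer of the empirical average of $\ell_\lambda$. Two ingredients are needed:
\begin{enumerate}
    \item the reference algorithm already achieves $R^*$ on the full data;
    \item $\A$ is $\beta$-stable with respect to $\A^*$ for the loss $\ell_\lambda$, with the $\beta$ of Proposition~\ref{prop:erm-stable-loss}.
\end{enumerate}
Theorem~\ref{thm:main-intro} then gives the claim with $\alpha - \beta = R^*$. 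That theorem needs neither boundedness nor any particular parameter space, so the unbounded $\ell_\lambda$ is fine.

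For the first ingredient, write $F_D(\theta) = \hat R_D(\theta) + \tfrac{\lambda}{2}\|\theta\|_2^2$. By exchangeability and symmetry,
\[
\E[\ell_\lambda(Z_{n+1};\thetahat_{n+1})] = \E[F_{D_{1:n+1}}(\thetahat_{n+1})].
\]
Let $\theta^*$ be the population minimizer of $\E[\ell_\lambda(Z_{n+1};\theta)]$. Since $\thetahat_{n+1}$ minimizes $F_{D_{1:n+1}}$, we have $F_{D_{1:n+1}}(\thetahat_{n+1}) \le F_{D_{1:n+1}}(\theta^*)$ pointwise. Taking expectations, and using that $\theta^*$ is deterministic, gives $\E[F_{D_{1:n+1}}(\theta^*)] = R^*$.

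For the second ingredient, I would reuse the mechanism behind Proposition~\ref{prop:erm-stable-loss}. The objective $F_D$ is $\lambda$-strongly convex, and $F_{D_{1:n+1}}$ differs from $\tfrac{n}{n+1}F_{D_{-i}}$ by $\tfrac{1}{n+1}\ell_\lambda(Z_i;\cdot)$. Comparing first-order optimality of $\thetahat_{-i}$ and $\thetahat_{n+1}$ should then give $\|\thetahat_{-i}-\thetahat_{n+1}\|_2 \lesssim \rho(Z_i)/(\lambda(n+1))$. The Lipschitz condition~\eqref{eq:lipschitz-rho-simple} converts this into the bound on the $\ell$ part. For the full regularized loss I would use the identity
\[
\ell_\lambda(Z_i;\theta) = (n+1)F_{D_{1:n+1}}(\theta) - nF_{D_{-i}}(\theta),
\]
which yields
\[
\ell_\lambda(Z_i;\thetahat_{-i})-\ell_\lambda(Z_i;\thetahat_{n+1}) = (n+1)\bigl[F_{D_{1:n+1}}(\thetahat_{-i})-F_{D_{1:n+1}}(\thetahat_{n+1})\bigr] + n\bigl[F_{D_{-i}}(\thetahat_{n+1})-F_{D_{-i}}(\thetahat_{-i})\bigr].
\]
Both brackets are suboptimality gaps of a strongly convex objective evaluated at a nearby point. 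Each can be bounded by the Lipschitz-times-distance bound on $\ell(Z_i;\cdot)$, since optimality of each minimizer cancels the other data and the regularizer. Summing the two gaps and averaging over $i$ should give $\tfrac{2\E[\rho(Z_{n+1})^2]}{\lambda(n+1)}$ after using exchangeability to replace $\E[\rho(Z_i)^2]$ by $\E[\rho(Z_{n+1})^2]$.

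The main obstacle is the regularizer term in this stability step. Proposition~\ref{prop:erm-stable-loss} as stated controls only the unregularized $\ell$, while $\tfrac{\lambda}{2}\|\thetahat_n\|_2^2$ need not be dominated by $\tfrac{\lambda}{2}\|\thetahat_{n+1}\|_2^2$ in expectation. So I cannot simply invoke the proposition and add the regularizer separately. I will try the telescoping identity above first, because it absorbs the regularizer into the two objectives $F_{D_{1:n+1}}$ and $F_{D_{-i}}$, each of which is minimized by one of the two competitors. The difficulty is getting the constant exactly $2$ rather than a larger multiple. If the two gaps do not combine cleanly, I would fall back to bounding each gap by $\tfrac{1}{n+1}\rho(Z_i)\|\thetahat_{-i}-\thetahat_{n+1}\|_2$ via the strong-convexity sandwich, accepting a possibly worse constant.
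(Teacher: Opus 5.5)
\paragraph{Verdict.} Your second ingredient is false in general, so the proposal does not prove the stated bound. Your architecture is the one the paper has in mind. The paper gives no proof beyond saying the corollary follows from Proposition~\ref{prop:erm-stable-loss} together with Theorem~\ref{thm:main-intro}, with $\A^*=\A$ run on $D_{1:n+1}$.

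\paragraph{What is right.} Your first ingredient, $\E[\ell_\lambda(Z_{n+1};\thetahat_{n+1})]=\E[F_{D_{1:n+1}}(\thetahat_{n+1})]\le R^*$, is correct. You are also right that the proposition controls only $\ell$. The literal combination yields $\E[\ell(Z_{n+1};\thetahat_n)]+\tfrac{\lambda}{2}\E\|\thetahat_{n+1}\|_2^2\le R^*+\beta$, with $\thetahat_{n+1}$ rather than $\thetahat_n$ in the penalty.

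\paragraph{The second ingredient fails.} The problem is not only the constant: in general $\A$ is \emph{not} $\frac{2\E[\rho(Z_{n+1})^2]}{\lambda(n+1)}$-stable for $\ell_\lambda$. Take $n=1$, $\ell(z;\theta)=\rho\|\theta-z\|_2$, and $Z$ uniform on the sphere of radius $r=\rho/\lambda$ in $\R^d$ with $d$ large, so that pairs are nearly orthogonal.
\begin{itemize}
\item The one-point estimator returns $z$ itself, since $\|\lambda z\|_2=\rho$. For an orthogonal pair, the leave-one-out side of the stability inequality is therefore $\rho r\sqrt2+\tfrac{\lambda}{2}r^2=(\sqrt2+\tfrac12)\rho^2/\lambda\approx1.91\,\rho^2/\lambda$.
\item The full-data minimizer is $\tfrac t2(z_1+z_2)$ with $t\approx0.56$, giving $F_{D_{1:2}}(\thetahat_2)\approx0.85\,\rho^2/\lambda$.
\item The gap is about $1.06\,\rho^2/\lambda$, which exceeds $\beta=\rho^2/\lambda$.
\end{itemize}
The corollary itself survives in this example, with excess about $0.91\,\rho^2/\lambda$. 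It does so only because $\E[F_{D_{1:2}}(\thetahat_2)]$ lies strictly below $R^*=\rho^2/\lambda$, and your first ingredient throws that slack away.

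\paragraph{Why the local mechanism breaks.} In the leave-one-out comparison the data weight changes from $\frac1n$ to $\frac1{n+1}$, but the penalty weight stays $\frac\lambda2$. Hence $F_{D_{1:n+1}}=\tfrac{n}{n+1}F_{D_{-i}}+\tfrac{1}{n+1}\ell_\lambda(Z_i;\cdot)$. After using optimality, what survives is $\ell_\lambda(Z_i;\cdot)$ with the penalty included. Its subgradient at $\thetahat_{-i}$ is $g+\lambda\thetahat_{-i}$, whose norm can reach $\rho(Z_i)+\frac1n\sum_{j\neq i}\rho(Z_j)$, not just $\rho(Z_i)$.

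For instance, take $\ell(z;\theta)=-z\theta$ and $Z_i=0$. Then $\rho(Z_i)=0$, yet $\thetahat_{n+1}=\tfrac{n}{n+1}\thetahat_{-i}\neq\thetahat_{-i}$, and both of your brackets equal $\tfrac\lambda2\|\thetahat_{-i}-\thetahat_{n+1}\|_2^2>0$. This single example refutes three of your steps at once:
\begin{itemize}
\item the distance bound $\|\thetahat_{-i}-\thetahat_{n+1}\|_2\lesssim\rho(Z_i)/(\lambda(n+1))$;
\item the per-gap Lipschitz-times-distance bound;
\item your fallback.
\end{itemize}

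\paragraph{Where the cancellation does occur.} It happens in a replace-one comparison, where sample size and penalty are unchanged. Write $\E[\ell_\lambda(Z_{n+1};\thetahat_n)]=\E[F_{D_{1:n}}(\thetahat_n)]+\E[\ell(Z_{n+1};\thetahat_n)-\hat R_{1:n}(\thetahat_n)]$. The first term is at most $R^*$. Swapping $Z_i\leftrightarrow Z_{n+1}$ moves the parameter by at most $(\rho(Z_i)+\rho(Z_{n+1}))/(\lambda n)$, which gives the bound with $\frac{2\E[\rho(Z_{n+1})^2]}{\lambda n}$. A correct leave-one-out bound using $h_i\in\partial\ell_\lambda(Z_i;\thetahat_{-i})$ gives constant $4$ instead. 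To reach exactly $\frac{2\E[\rho(Z_{n+1})^2]}{\lambda(n+1)}$ through Theorem~\ref{thm:main-intro} with $\A^*=\A$, you would have to keep part of the slack $R^*-\E[F_{D_{1:n+1}}(\thetahat_{n+1})]$.
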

These results represent minor variations on the canonical results in Section 5 of~\cite{bousquet2002stability}.

\subsubsection{Risk Control Guarantees on the Gradient}

Next, we address stability of the expected gradient.
Building towards this, we will need a multivariate notion of stability to handle gradients when $d>1$: an algorithm $\cA$ is $\beta$-stable with respect to $\A^*$ and $g$ if
\begin{equation}
        \E\left[\frac{1}{n+1}\sum\limits_{i=1}^{n+1} g(X_{i}, Y_{i}; \A(D_{-i}))\right] 
        \preceq \E\left[\frac{1}{n+1}\sum\limits_{i=1}^{n+1} g(X_{i}, Y_{i}; \A^*(D_{1:n+1}))\right] + \beta,
\end{equation}
where $\beta \in \R^d$, and $\preceq$ represents the standard partial ordering on vectors (i.e., the inequality holds component-wise).
Theorem~\ref{thm:main-intro} also has an extension to $d>1$, which we present below.
\begin{theorem}
    \label{thm:main-high-d}
    Assume $\A$ is symmetric and $\beta$-stable with respect to $\A^*$ and $g$, that $D_{1:n+1}$ is exchangeable, and that
\begin{equation}
        \E\left[g(X_{n+1}, Y_{n+1}; \A^*(D_{1:n+1})) \right] \preceq \alpha-\beta.
\end{equation}
    Then
    \begin{equation}
        \E\left[g(X_{n+1}, Y_{n+1}; \A(D_{1:n})) \right] \preceq \alpha.
    \end{equation}
\end{theorem}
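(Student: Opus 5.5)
The plan is to mirror the proof of Theorem~\ref{thm:main-intro} coordinate by coordinate. The componentwise order $\preceq$ is preserved under addition of vectors, and it is preserved under expectation whenever the expectations exist. So the scalar argument should carry over verbatim, provided we read each scalar step as $d$ simultaneous scalar inequalities. Throughout, I will assume that $g$ is integrable, so that all vector expectations below are well defined. I will also read $\alpha$ either as a vector in $\R^d$ or as the scalar $\alpha$ times the all-ones vector.

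\textbf{Step 1 (symmetrization identities).} Exchangeability of $D_{1:n+1}$ and symmetry of $\A^*$ imply that for each $i$, the pair $(Z_i, \A^*(D_{1:n+1}))$ has the same joint law as $(Z_{n+1}, \A^*(D_{1:n+1}))$. This holds because swapping $Z_i$ and $Z_{n+1}$ leaves the law of $D_{1:n+1}$ unchanged, and it leaves $\A^*(D_{1:n+1})$ unchanged by permutation invariance. Hence
\[
\E\!\left[\tfrac{1}{n+1}\textstyle\sum_{i=1}^{n+1} g(Z_i;\A^*(D_{1:n+1}))\right]=\E\!\left[g(Z_{n+1};\A^*(D_{1:n+1}))\right]
\]
as vectors. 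Each coordinate is a scalar instance of the identity already used in the proof of Theorem~\ref{thm:main-intro}.

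The same swap argument applies to $\A$. Swapping $Z_i$ and $Z_{n+1}$ maps $D_{-i}$ to a permutation of $D_{1:n}$, and symmetry of $\A$ then gives that $(Z_i,\A(D_{-i}))$ has the same law as $(Z_{n+1},\A(D_{1:n}))$. Consequently
\[
\E\!\left[\tfrac{1}{n+1}\textstyle\sum_{i} g(Z_i;\A(D_{-i}))\right]=\E\!\left[g(Z_{n+1};\A(D_{1:n}))\right].
\]

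\textbf{Step 2 (chain the inequalities).} Substituting the two identities into the definition of $\beta$-stability with respect to $\A^*$ and $g$ gives
\[
\E[g(Z_{n+1};\A(D_{1:n}))]\preceq \E[g(Z_{n+1};\A^*(D_{1:n+1}))]+\beta.
\]
Adding $\beta$ to both sides of the hypothesis $\E[g(Z_{n+1};\A^*(D_{1:n+1}))]\preceq\alpha-\beta$ preserves the componentwise order. Transitivity of $\preceq$ then yields $\E[g(Z_{n+1};\A(D_{1:n}))]\preceq\alpha$.

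\textbf{Main obstacle.} The only real subtlety is in Step 1. There we must check that the distributional-symmetry argument is applied to the full vector $g$, not merely coordinatewise to marginals. Since expectation of a vector is taken coordinatewise, equality in law of each pair $(Z_i,\text{output})$ suffices. Beyond this, the only thing to confirm is integrability, so that the vector expectations and their sums are finite and the manipulations are legitimate. I expect no other difficulty: the theorem is the componentwise lift of Theorem~\ref{thm:main-intro}.
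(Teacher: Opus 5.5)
Your proposal is correct and is the same proof as the paper's, which just says the argument is that of Theorem~\ref{thm:main-intro} with $\preceq$ in place of $\leq$: two exchangeability/symmetry identities, then chaining the stability bound with the hypothesis. Your transposition argument for the laws of $(Z_i,\A(D_{-i}))$ and $(Z_i,\A^*(D_{1:n+1}))$, and your integrability caveat, spell out what the paper leaves implicit.
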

The proof of this theorem is identical to that of Theorem~\ref{thm:main-intro}, except with $\preceq$ in place of $\leq$.
Now we proceed with the stability result, stated below for differentiable convex losses. The differentiability is not needed, and is only used to simplify the statement and proof of the proposition.
\begin{proposition}
\label{prop:erm-grad-stability}
Assume $\ell$ is a convex, differentiable function of $\theta$ and that there exists a measurable function $\rho : \cZ \to [0,\infty)$ such that for all $\theta,\theta' \in \R^d$ and almost all $z \in \cZ$,
\begin{equation}
\|\nabla\ell(z;\theta) - \nabla\ell(z;\theta')\|_2 \leq \rho(z)\|\theta - \theta'\|_2.
\end{equation}
Assume also that $\hat{R}_{1:n+1}$ is $\mu$-strongly convex for some $\mu \in [0,\infty)$ (noting that $\mu=0$ is also allowable, so it need not be strongly convex).
Then the regularized ERM algorithm $\cA$ is $\beta$-stable with respect to $\nabla\ell$, where
\begin{equation}
\beta = \frac{\E[\rho(Z_{n+1})\|\nabla\ell(Z_{n+1}; \thetahat_n)\|_2] + \E\left[\rho(Z_{n+1})\|\tfrac{1}{n}\sum_{j=1}^n\nabla\ell(Z_j;\thetahat_n)\|_2\right]}{(\mu + \lambda)(n+1)} \mathbf{1}_d.
\end{equation}
\end{proposition}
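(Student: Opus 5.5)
The plan is to take the reference algorithm to be $\A^*\equiv\A$, i.e.\ regularized ERM run on the full data. I would prove an almost-sure, per-index bound on the gradient difference $\nabla\ell(Z_i;\thetahat_{-i})-\nabla\ell(Z_i;\thetahat_{n+1})$ and then average over $i$ and use exchangeability, as in the proof of Theorem~\ref{thm:main-intro}. The componentwise inequality $\preceq$ follows from a Euclidean-norm bound, since each coordinate of a vector is at most its $\ell_2$ norm. The gradient-Lipschitz assumption gives
\[
\bigl\|\nabla\ell(Z_i;\thetahat_{-i})-\nabla\ell(Z_i;\thetahat_{n+1})\bigr\|_2\le \rho(Z_i)\,\|\thetahat_{-i}-\thetahat_{n+1}\|_2 ,
\]
so everything reduces to controlling the parameter shift $\|\thetahat_{-i}-\thetahat_{n+1}\|_2$.

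To control the shift, let $F(\theta)=\hat R_{1:n+1}(\theta)+\tfrac{\lambda}{2}\|\theta\|_2^2$. By assumption $F$ is $(\mu+\lambda)$-strongly convex, and $\nabla F(\thetahat_{n+1})=0$. Strong convexity gives $\langle \nabla F(\theta)-\nabla F(\theta'),\theta-\theta'\rangle\ge(\mu+\lambda)\|\theta-\theta'\|_2^2$, which with Cauchy--Schwarz yields
\[
\|\thetahat_{-i}-\thetahat_{n+1}\|_2\le \frac{\|\nabla F(\thetahat_{-i})\|_2}{\mu+\lambda}.
\]
To compute $\nabla F(\thetahat_{-i})$, I will use the leave-one-out first-order condition $\tfrac1n\sum_{j\neq i}\nabla\ell(Z_j;\thetahat_{-i})+\lambda\thetahat_{-i}=0$. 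Writing
\[
\nabla F(\thetahat_{-i})=\tfrac{1}{n+1}\nabla\ell(Z_i;\thetahat_{-i})+\tfrac{n}{n+1}\cdot\tfrac1n\sum_{j\ne i}\nabla\ell(Z_j;\thetahat_{-i})+\lambda\thetahat_{-i}
\]
and substituting $\lambda\thetahat_{-i}=-\tfrac1n\sum_{j\ne i}\nabla\ell(Z_j;\thetahat_{-i})$, I expect to get
\[
\nabla F(\thetahat_{-i})=\frac{1}{n+1}\Bigl(\nabla\ell(Z_i;\thetahat_{-i})-\tfrac1n\textstyle\sum_{j\ne i}\nabla\ell(Z_j;\thetahat_{-i})\Bigr).
\]
The triangle inequality then bounds $\|\nabla F(\thetahat_{-i})\|_2$ by $\tfrac{1}{n+1}$ times the sum of the two norms.

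Combining the two displays gives, almost surely,
\[
\bigl\|\nabla\ell(Z_i;\thetahat_{-i})-\nabla\ell(Z_i;\thetahat_{n+1})\bigr\|_2\le\frac{\rho(Z_i)\Bigl(\|\nabla\ell(Z_i;\thetahat_{-i})\|_2+\|\tfrac1n\sum_{j\ne i}\nabla\ell(Z_j;\thetahat_{-i})\|_2\Bigr)}{(\mu+\lambda)(n+1)}.
\]
I would average this over $i\in[n+1]$ and take expectations. Symmetry of $\A$ and exchangeability of $D_{1:n+1}$ make every summand equal in distribution to the $i=n+1$ term, where $\thetahat_{-(n+1)}=\thetahat_n$. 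This produces exactly the two expectations in $\beta$, and applying the bound coordinatewise gives the factor $\mathbf{1}_d$.

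The step I expect to need the most care is the strong-convexity inversion. It requires that the minimizers exist, that the first-order conditions hold (convexity plus differentiability suffice once a minimizer exists), and that $\mu+\lambda>0$ so the bound is finite; when $\mu+\lambda=0$ the claim is vacuous. I would also check that a strong-convexity constant $\mu$ holding only almost surely is enough for the pointwise bound, which it is, since that bound is only needed almost surely.
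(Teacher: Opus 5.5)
Your proposal is correct and follows the paper's proof essentially step for step. You use strong convexity of the full regularized objective to get $\|\thetahat_{-i}-\thetahat_{n+1}\|_2\le\|\nabla F(\thetahat_{-i})\|_2/(\mu+\lambda)$, the leave-one-out first-order condition to rewrite that gradient as $\tfrac{1}{n+1}$ times the difference of the held-out and averaged training gradients, then the gradient-Lipschitz bound, exchangeability, and the passage from an $\ell_2$ bound to a componentwise one; the only cosmetic difference is that you bound every index $i$ before invoking exchangeability, whereas the paper applies exchangeability first and works directly with $i=n+1$.
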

Eliding the Lipschitz constant $\rho(Z_{n+1})$, this result tells us that ERM is stable with
\begin{equation}
    \beta = \frac{\E[\text{magnitude of test gradient}] + \E[\text{magnitude of training gradient}]}{(\mu + \lambda)(n+1)}.
\end{equation}
Under normal circumstances, both terms in the numerator will be constant-order, and the latter term will be near-zero (in fact, when $\lambda=0$, it is exactly zero).
Proposition~\ref{prop:erm-grad-stability} also holds for nondifferentiable losses, replacing every gradient in $\beta$ with a supremum over all subgradients in the subdifferential at $\thetahat_{n}$; the proof of this more general result is morally the same, so we omit it.
The assumption that $\hat{R}_{1:n+1}$ is strongly convex is weaker than strong convexity of the loss function itself, and allows for, e.g., ordinary least squares with a full-rank design matrix, even though the individual losses may not be strongly convex.

Combining Proposition~\ref{prop:erm-grad-stability} with Theorem~\ref{thm:main-high-d} gives a multi-dimensional risk control bound, as below.
\begin{corollary}
    \label{cor:regularized-erm}
    In the same setting as Proposition~\ref{prop:erm-grad-stability}, we have that
    \begin{equation}
        \E\left[\nabla\ell(X_{n+1}, Y_{n+1}; \thetahat_n) \right] \preceq \beta - \lambda\E[\thetahat_{n}].
    \end{equation}
\end{corollary}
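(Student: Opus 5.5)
We obtain Corollary~\ref{cor:regularized-erm} by applying Theorem~\ref{thm:main-high-d} with $g=\nabla\ell$, taking the reference algorithm to be the full-data regularized ERM, $\A^*\equiv\A$. The proof has three steps: control the reference quantity, invoke the stability result, and then subtract off the regularization term.

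\textbf{Step 1: the full-data first-order condition.} Since $\ell$ is convex and differentiable, $\thetahat_{n+1}=\A(D_{1:n+1})$ satisfies
\begin{equation}
\frac{1}{n+1}\sum_{i=1}^{n+1}\nabla\ell(Z_i;\thetahat_{n+1})+\lambda\thetahat_{n+1}=0 .
\end{equation}
Take expectations and use exchangeability together with the symmetry of $\A$, exactly as in the proof of Theorem~\ref{thm:main-intro}. This gives
\begin{equation}
\E[\nabla\ell(Z_{n+1};\thetahat_{n+1})]=-\lambda\E[\thetahat_{n+1}].
\end{equation}
Hence the hypothesis of Theorem~\ref{thm:main-high-d} holds with $\alpha-\beta=-\lambda\E[\thetahat_{n+1}]$, i.e.\ with the vector
\begin{equation}
\alpha=\beta-\lambda\E[\thetahat_{n+1}].
\end{equation}

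\textbf{Step 2: stability.} Proposition~\ref{prop:erm-grad-stability} supplies exactly the $\beta$-stability of $\A$ with respect to $\A^*=\A$ and $g=\nabla\ell$. Theorem~\ref{thm:main-high-d} therefore yields
\begin{equation}
\E[\nabla\ell(Z_{n+1};\thetahat_n)]\preceq\beta-\lambda\E[\thetahat_{n+1}].
\end{equation}

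\textbf{Step 3: replacing $\E[\thetahat_{n+1}]$ by $\E[\thetahat_n]$.} This is the main obstacle, since the corollary is stated with $\E[\thetahat_n]$ rather than $\E[\thetahat_{n+1}]$. We would avoid the replacement altogether by changing the function fed into Theorem~\ref{thm:main-high-d}. Let
\begin{equation}
g(z;\theta)=\nabla\ell(z;\theta)+\lambda\theta .
\end{equation}
For this $g$, the reference quantity from Step 1 is exactly zero:
\begin{equation}
\E\Bigl[\tfrac{1}{n+1}\sum_{i}g(Z_i;\thetahat_{n+1})\Bigr]=0 .
\end{equation}
The regularization term contributes the same amount to the leave-one-out average and to the full-data average, up to the difference $\lambda\bigl(\E[\thetahat_{-i}]-\E[\thetahat_{n+1}]\bigr)$.

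We therefore need to check that the argument behind Proposition~\ref{prop:erm-grad-stability} in fact bounds the stability of the combined map $\nabla\ell+\lambda\theta$ by the same $\beta$. That argument controls $\|\thetahat_{-i}-\thetahat_{n+1}\|_2$ by $1/((\mu+\lambda)(n+1))$ times the gradient norms, and the extra term $\lambda(\thetahat_{-i}-\thetahat_{n+1})$ can be absorbed in the same way. Granting this, Theorem~\ref{thm:main-high-d} with $\alpha=\beta$ gives
\begin{equation}
\E[\nabla\ell(Z_{n+1};\thetahat_n)]+\lambda\E[\thetahat_n]\preceq\beta .
\end{equation}
Rearranging gives the stated bound. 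If the absorption step fails, the fallback is to report the bound with $\E[\thetahat_{n+1}]$ in place of $\E[\thetahat_n]$.
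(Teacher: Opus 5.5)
Your Steps 1 and 2 are exactly the paper's proof. The full-data first-order condition plus exchangeability gives $\E[\nabla\ell(Z_{n+1};\thetahat_{n+1})]=-\lambda\E[\thetahat_{n+1}]$. Proposition~\ref{prop:erm-grad-stability} with Theorem~\ref{thm:main-high-d} then gives $\E[\nabla\ell(Z_{n+1};\thetahat_n)]\preceq\beta-\lambda\E[\thetahat_{n+1}]$. You correctly noticed that this is not the displayed bound; the paper's proof writes $\E[\thetahat_n]$ at the end without justification.

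Your Step 3 does not bridge this. The map $g(z;\theta)=\nabla\ell(z;\theta)+\lambda\theta$ is $(\rho(z)+\lambda)$-Lipschitz, so the argument of Proposition~\ref{prop:erm-grad-stability} bounds its stability only with $\rho$ replaced by $\rho+\lambda$. The extra piece $\lambda(\thetahat_n-\thetahat_{n+1})$ is controlled by $\lambda\|\nabla\ell(Z_{n+1};\thetahat_n)+\lambda\thetahat_n\|_2/((\mu+\lambda)(n+1))$. That term has no factor $\rho(Z_{n+1})$, so it is not dominated by $\beta$ when $\rho$ is small compared with $\lambda$.

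Moreover, no argument can supply the step you ``grant'', because the displayed inequality is false as stated. By the $n$-sample first-order condition and exchangeability, $\E[\nabla\ell(Z_{n+1};\thetahat_n)]+\lambda\E[\thetahat_n]=\E[\nabla\ell(Z_{n+1};\thetahat_n)-\nabla\ell(Z_1;\thetahat_n)]$. This is an out-of-sample/in-sample gap at sample size $n$ (a replace-one quantity), not the leave-one-out quantity that $\beta$ controls.

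Concretely, take $d=n=1$, $\lambda=2$, $Z$ uniform on $\{a,b\}$, $\ell(a;\theta)=\theta^2/2$ and $\ell(b;\theta)=-\theta$. Then $\rho(a)=1$ and $\rho(b)=0$, and $\mu=0$ is forced. $\thetahat_1$ equals $0$ or $1/2$ according as $Z_1=a$ or $Z_1=b$. Direct computation gives $\E[\ell'(Z_2;\thetahat_1)]=-3/8$, $\lambda\E[\thetahat_1]=1/2$ and $\beta=3/32$. So the claimed inequality $-3/8\le 3/32-1/2=-13/32$ fails. The version with $\E[\thetahat_{n+1}]$ does hold here: $\E[\thetahat_2]=9/40$ and $-3/8\le 3/32-9/20$.

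So your fallback is the correct conclusion, and it is what the paper's argument actually proves. If you want $\E[\thetahat_n]$, carry out your Step 3 honestly with $\rho$ replaced by $\rho+\lambda$ in $\beta$. For $\lambda=0$, as in Corollary~\ref{cor:unbiased-least-squares}, the two versions coincide.
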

This tells us the risk is controlled up to $\beta$ plus a term to handle regularization.

Finally, we present an adjusted algorithm that achieves a conservative risk control guarantee on the gradient.
Let
\begin{equation}
    \label{eq:erm-linear}
    \widetilde{\A}_\gamma(D) = \argmin_{\theta \in \R^d} \hat{R}_D(\theta) + \tfrac{\lambda}{2} \|\theta\|_2^2 + \gamma \mathbf{1}_d^{\top}\theta.
\end{equation}
The final term, $\gamma \mathbf{1}_d^{\top}\theta$, has gradient $\gamma\mathbf{1}_d$, pushing all components of the solution in the negative direction to compensate, while retaining convexity.
A careful selection of $\gamma$ to balance terms can therefore yield risk control.
\begin{proposition}
    \label{prop:regularized-erm-conservative}
    In the same setting as Proposition~\ref{prop:erm-grad-stability}, we have that 
    \begin{equation}
        \E\left[\nabla\ell(X_{n+1}, Y_{n+1}; \thetahat_n) \right] \preceq \beta_\gamma - \lambda\E[\thetahat_{n}] - \gamma\mathbf{1}_d,
    \end{equation}
    where
    \begin{equation}
        \beta_{\gamma} = \frac{\E[\rho(Z_{n+1})\|\nabla\ell(Z_{n+1};\thetahat_n)\|_2] + \E\left[\rho(Z_{n+1})\right]\left(\E\left[\|\tfrac{1}{n}\sum_{j =1}^n\nabla\ell(Z_j;\thetahat_n)\|_2\right]+2\gamma\right)}{(\mu + \lambda) (n+1)} \mathbf{1}_d.
    \end{equation}
    Furthermore, provided $n > \tfrac{2}{\mu+\lambda} \E[\rho(Z_{n+1})] - 1$, there exists an explicit choice of $\gamma$ that gives
    \begin{equation}
        \E\left[ \nabla\ell(X_{n+1}, Y_{n+1}; \thetahat_n) \right] \preceq 0.
    \end{equation}
\end{proposition}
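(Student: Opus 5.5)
The plan is to treat $\widetilde{\A}_\gamma$ as ordinary regularized ERM applied to a tilted loss, and then reuse Proposition~\ref{prop:erm-grad-stability} together with Theorem~\ref{thm:main-high-d}.

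\textbf{Step 1 (tilted loss).} Define
\[
\tilde\ell(z;\theta)=\ell(z;\theta)+\gamma\mathbf{1}_d^\top\theta .
\]
Then $\widetilde{\A}_\gamma$ is exactly regularized ERM on $\tilde\ell$. Also $\nabla\tilde\ell=\nabla\ell+\gamma\mathbf{1}_d$. So $\tilde\ell$ is convex and differentiable, has the same gradient-Lipschitz function $\rho$, and its empirical risk has the same strong convexity parameter $\mu$.

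\textbf{Step 2 (stability for the tilted loss).} I would apply Proposition~\ref{prop:erm-grad-stability} to $\tilde\ell$. Since the constant $\gamma\mathbf{1}_d$ is the same for every datapoint, stability with respect to $\nabla\tilde\ell$ is equivalent to stability with respect to $\nabla\ell$. I would not quote the proposition blindly; I would redo its argument, which has two parts.
\begin{itemize}
\item[(a)] Perturbation bound: $\|\thetahat_{-i}-\thetahat_{n+1}\|_2$ is at most the norm of the gradient of the leave-one-out objective at $\thetahat_{n+1}$, divided by $(\mu+\lambda)$.
\item[(b)] Lipschitz step: $\rho(Z_i)$ times this displacement bounds each coordinate of the gradient difference.
\end{itemize}
In the numerator, the term $\|\nabla\tilde\ell(Z_{n+1};\thetahat_n)\|$ should be kept as $\nabla\ell$ where possible. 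The averaged training gradient $\tfrac1n\sum_j\nabla\tilde\ell(Z_j;\thetahat_n)$ is then split by the triangle inequality, which produces $\|\tfrac1n\sum_j\nabla\ell\|_2$ plus a $\gamma$ contribution.

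\textbf{Main obstacle.} The hardest part is matching the exact form of $\beta_\gamma$. The target factorizes $\E[\rho]\cdot(\E\|\cdot\|+2\gamma)$, which suggests the $\gamma$-dependent pieces are bounded deterministically. Two points need care.
\begin{itemize}
\item The tilt enters both the test-gradient term and the training term. That is where I expect the factor $2\gamma$ to come from, since $\|\gamma\mathbf{1}_d\|$ must be handled so that it is absorbed into the stated constant.
\item The product $\E[\rho\,\|\cdot\|]$ must become $\E[\rho]\,\E\|\cdot\|$. I would justify this by exchangeability: in the leave-one-out formulation, the relevant training gradient is computed on $D_{-i}$ while $\rho$ is evaluated at $Z_i$. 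Failing that, I would bound the quantity crudely.
\end{itemize}

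\textbf{Step 3 (risk bound).} I would apply Theorem~\ref{thm:main-high-d} with $\A^*=\widetilde{\A}_\gamma$ on the full data. The full-data first-order condition gives
\[
\hat R'_{1:n+1}(\thetahat)+\lambda\thetahat+\gamma\mathbf{1}_d=0 .
\]
Taking expectations and using exchangeability,
\[
\E[\nabla\ell(Z_{n+1};\A^*(D_{1:n+1}))]=-\lambda\E[\thetahat]-\gamma\mathbf{1}_d .
\]
Combining this with the stability bound yields the first display.

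\textbf{Step 4 (choosing $\gamma$).} The bound is affine in $\gamma$, with coefficient
\[
\frac{2\E[\rho]}{(\mu+\lambda)(n+1)}-1 ,
\]
which is negative exactly when $n>\tfrac{2}{\mu+\lambda}\E[\rho(Z_{n+1})]-1$. In that case I would pick $\gamma$ to solve
\[
\beta_\gamma-\gamma\mathbf{1}_d=0 ,
\]
which gives explicitly
\[
\gamma=\frac{A}{1-\tfrac{2\E[\rho]}{(\mu+\lambda)(n+1)}},
\]
where $A$ is the $\gamma$-free part of $\beta_\gamma$. This makes the bound at most $-\lambda\E[\thetahat_n]$. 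The $-\lambda\E[\thetahat_n]$ term, if it is not sign-controlled, would be folded into $A$ using a bound on $\E\|\thetahat_n\|$, or handled by noting the statement when $\lambda=0$.
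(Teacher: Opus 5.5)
Your route is the paper's. You view $\widetilde{\A}_\gamma$ as regularized ERM on $\ell+\gamma\mathbf{1}_d^\top\theta$, combine Proposition~\ref{prop:erm-grad-stability} with Theorem~\ref{thm:main-high-d} through the tilted full-data first-order condition, and solve the resulting affine inequality for $\gamma$. The paper handles the $-\lambda\E[\thetahat_n]$ term by adding $\lambda\|\E[\thetahat_n]\|_\infty$ to the numerator, which is your ``fold into $A$''. The paper's proof is only that reduction, and the details you propose for matching $\beta_\gamma$ do not work as stated.

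\emph{Factorization.} Exchangeability does not let you factor $\E[\rho(Z_{n+1})\|\tfrac1n\sum_{j\le n}\nabla\ell(Z_j;\thetahat_n)\|_2]$. Under mere exchangeability $Z_i$ and $D_{-i}$ are dependent, so $\rho(Z_{n+1})$ and the training-gradient term can be positively correlated. The factorized form therefore needs $Z_{n+1}$ independent of $D_{1:n}$ (i.i.d.\ data), which the paper leaves unstated.

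\emph{The $2\gamma$ term.} The triangle inequality on $\nabla\ell+\gamma\mathbf{1}_d$ costs $\|\gamma\mathbf{1}_d\|_2=\gamma\sqrt d$ per term. That gives $2\gamma\sqrt d\,\E[\rho(Z_{n+1})]$, not $2\gamma\,\E[\rho(Z_{n+1})]$. The right observation is that the tilt cancels. For the tilted objectives $F_{n+1}$ and $F_{-i}$,
$\nabla F_{n+1}(\theta)-\nabla F_{-i}(\theta)=\tfrac{1}{n+1}\nabla\ell(Z_i;\theta)-\tfrac{1}{n(n+1)}\sum_{j\ne i}\nabla\ell(Z_j;\theta)$.
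So the stability constant is exactly the one in Proposition~\ref{prop:erm-grad-stability}, evaluated at the tilted $\thetahat_n$. Given independence, $\beta_\gamma$ dominates it for $\gamma\ge0$.

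\emph{Step (a).} Your perturbation bound invokes $(\mu+\lambda)$-strong convexity of the leave-one-out objective, which is not assumed; only $\hat R_{1:n+1}$ is $\mu$-strongly convex. The paper instead uses strong convexity of $F_{n+1}$ at $u=\thetahat_{-i}$. That choice is also what puts the gradients at $\thetahat_n$, as in $\beta_\gamma$.

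The more serious gap is in Step 4, and the paper's proof has it as well. $A$ and $\E[\thetahat_n]$ are not $\gamma$-free, because $\thetahat_n=\widetilde{\A}_\gamma(D_{1:n})$ moves with $\gamma$. Your ``explicit'' $\gamma$ is therefore a fixed-point equation, and $n>\tfrac{2}{\mu+\lambda}\E[\rho(Z_{n+1})]-1$ does not guarantee it has a solution.

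In fact, the tilted first-order condition on $D_{1:n}$ gives $-\lambda\E[\thetahat_n]-\gamma\mathbf{1}_d=\E[\tfrac1n\sum_{j\le n}\nabla\ell(Z_j;\thetahat_n)]$. So for $\lambda>0$ the regularizer can absorb the tilt entirely. Take $d=1$, $\ell(z;\theta)=\theta$, $\rho\equiv0$, $\mu=0$, $\lambda>0$: every hypothesis holds, yet $\E[\nabla\ell(Z_{n+1};\thetahat_n)]=1$ for every $\gamma$. The second claim thus needs extra hypotheses, not just more careful bookkeeping. One option is $\lambda=0$, $\mu>0$, plus control of how $\E[\rho(Z_{n+1})\|\nabla\ell(Z_{n+1};\thetahat_n)\|_2]$ grows in $\gamma$.

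A smaller slip you share with Corollary~\ref{cor:regularized-erm}: the first-order condition plus Theorem~\ref{thm:main-high-d} yields $-\lambda\E[\thetahat_{n+1}]$, not $-\lambda\E[\thetahat_n]$. Swapping them costs an extra $\tfrac{\lambda}{(\mu+\lambda)(n+1)}\E\bigl[\|\nabla\ell(Z_{n+1};\thetahat_n)\|_2+\|\tfrac1n\sum_{j\le n}\nabla\ell(Z_j;\thetahat_n)\|_2\bigr]$ in each coordinate.
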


These gradient bounds allow us to prove distribution-free multigroup risk control guarantees for ERM.
These guarantees are in the style of~\cite{gibbs2023conformal} and~\cite{blot2024automatically}, which handle the special cases of the quantile loss~\cite{koenker1978regression} and monotonic losses in $d=1$ respectively, and both of which only give full-conformal type guarantees that require having access to the test covariate $X_{n+1}$ and potentially looping through all possible test labels $y \in \cY$.
Herein, we allow a more general class of non-monotonic losses, a high-dimensional $\theta$, and also give split-conformal versions of these techniques that do not depend on the test covariate or label.
Depending on the form of the gradient, these guarantees can recover  multivalidity guarantees~\cite{bastani2022practical, jung2022batch} or multiaccuracy guarantees~\cite{kim2019multiaccuracy}, and are also related to the literature on multicalibration~\cite{hebert2018multicalibration, deng2023happymap, noarov2023statistical}.

The first example shows how we can post-process a black-box model to achieve multigroup unbiasedness using ordinary least squares.
For this purpose, define $\mathbf{X}_{n+1}=(X_1, \ldots, X_{n+1})^\top$ and $\lambda_{\rm min}$ as the minimum eigenvalue function.
\begin{corollary}
    \label{cor:unbiased-least-squares}
    Let $\cX=\{0,1\}^d$, $f : \cX \to \R$ be any fixed function, $\ell(x,y;\theta) = \tfrac{1}{2}\|f(x) + x^\top\theta - y\|_2^2$, and $\cA(D) = \argmin_{\theta \in \R^d} \frac{1}{|D|}\sum_{(x,y)\in D}\ell(x,y;\theta)$.
    Assume also that $\lambda_{\rm min}(\mathbf{X_{n+1}}^\top\mathbf{X_{n+1}}/(n+1)) \geq \mu > 0$ almost surely.
    Then for all $j \in [d]$,
    \begin{equation}
        \E[Y_{n+1} \mid X_{n+1,j} = 1] \in \left( \E[f(X_{n+1}) + X_{n+1}^\top\thetahat_{n} \mid X_{n+1,j} = 1] \pm \frac{d^{\frac{3}{2}}\E[|f(X_{n+1})+X_{n+1}^\top\thetahat_{n}-Y_{n+1}|]}{\mu(n+1)\P(X_{n+1,j}=1)}\right).
    \end{equation}
\end{corollary}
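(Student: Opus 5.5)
We apply Corollary~\ref{cor:regularized-erm} with $\lambda=0$, so the regularization term $-\lambda\E[\thetahat_n]$ vanishes, and we need both directions of the componentwise inequality. The loss is $\ell(x,y;\theta)=\tfrac12(f(x)+x^\top\theta-y)^2$, with gradient $\nabla\ell(z;\theta)=x\,(f(x)+x^\top\theta-y)$. Its $j$th coordinate is $x_j\,(f(x)+x^\top\theta-y)$, and because $x_j\in\{0,1\}$ this equals the residual on the event $x_j=1$ and zero otherwise. Hence
\[
\E\big[\nabla_j\ell(Z_{n+1};\thetahat_n)\big]=\P(X_{n+1,j}=1)\Big(\E[f(X_{n+1})+X_{n+1}^\top\thetahat_n\mid X_{n+1,j}=1]-\E[Y_{n+1}\mid X_{n+1,j}=1]\Big).
\]
Once we bound $|\E[\nabla_j\ell(Z_{n+1};\thetahat_n)]|$ by $\frac{d^{3/2}\E[|f(X_{n+1})+X_{n+1}^\top\thetahat_n-Y_{n+1}|]}{\mu(n+1)}$, dividing by $\P(X_{n+1,j}=1)$ gives the claimed interval.

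First we verify the hypotheses of Proposition~\ref{prop:erm-grad-stability}. The loss is convex and differentiable. Its gradient satisfies $\nabla\ell(z;\theta)-\nabla\ell(z;\theta')=xx^\top(\theta-\theta')$, so we may take $\rho(z)=\|xx^\top\|_{\mathrm{op}}=\|x\|_2^2\le d$. The Hessian of $\hat R_{1:n+1}$ is $\mathbf{X}_{n+1}^\top\mathbf{X}_{n+1}/(n+1)$, which is $\mu$-strongly convex by assumption. With $\lambda=0$, the training-gradient term $\tfrac1n\sum_{j\le n}\nabla\ell(Z_j;\thetahat_n)$ is exactly zero by first-order optimality of the unregularized ERM. (Strictly, we also need $\mathbf{X}_n$ to have full rank for $\thetahat_n$ to be well defined; otherwise we take any minimizer, and the optimality condition still holds.) The test-gradient norm is $\|X_{n+1}\|_2\,|f(X_{n+1})+X_{n+1}^\top\thetahat_n-Y_{n+1}|\le\sqrt d\,|r_{n+1}|$, where $r_{n+1}$ denotes the test residual. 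Combining with $\rho\le d$, Proposition~\ref{prop:erm-grad-stability} gives $\beta\preceq\frac{d^{3/2}\E[|r_{n+1}|]}{\mu(n+1)}\mathbf 1_d$. Corollary~\ref{cor:regularized-erm} then gives the upper bound $\E[\nabla\ell(Z_{n+1};\thetahat_n)]\preceq\beta$.

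For the lower bound, we rerun the same stability argument with $g=-\nabla\ell$. Theorem~\ref{thm:main-high-d} needs a reference algorithm, so we take $\A^*=\A$ on the full data. By first-order optimality, $\frac1{n+1}\sum_{i}\nabla\ell(Z_i;\thetahat_{n+1})=0$, so the reference condition holds with equality in both directions. What remains is to check that the bound in Proposition~\ref{prop:erm-grad-stability} controls the absolute deviation per coordinate, not just one side.

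This symmetry is the main obstacle. I expect the proof of Proposition~\ref{prop:erm-grad-stability} to bound each summand by a norm, roughly $\|\nabla\ell(Z_i;\thetahat_{-i})-\nabla\ell(Z_i;\thetahat_{n+1})\|_2\le\rho(Z_i)\|\thetahat_{-i}-\thetahat_{n+1}\|_2$, together with the strong-convexity perturbation bound $\|\thetahat_{-i}-\thetahat_{n+1}\|_2\le\frac{\|\nabla\ell(Z_i;\thetahat_{-i})\|_2}{\mu(n+1)}$ plus the training-gradient term. A norm bound is sign-agnostic, so the same $\beta$ also serves for $-\nabla\ell$. I would confirm this by rederiving the perturbation inequality in the least-squares case: $\thetahat_{n+1}-\thetahat_{-i}$ is the full-data Hessian inverse applied to $-\nabla\ell(Z_i;\thetahat_{-i})/(n+1)$, which makes the two-sided bound immediate. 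Finally, we combine the two directions, use exchangeability to replace $\thetahat_{-i}$ with $\thetahat_n$ on the test point, and divide by $\P(X_{n+1,j}=1)$.
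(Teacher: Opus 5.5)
Your proposal is correct and matches the paper's proof. You apply Corollary~\ref{cor:regularized-erm} with $\lambda=0$, $\rho\le d$, a vanishing training-gradient term and $\|\nabla\ell(Z_{n+1};\thetahat_n)\|_2\le\sqrt{d}\,|f(X_{n+1})+X_{n+1}^\top\thetahat_n-Y_{n+1}|$, use the identity $\E[X_{n+1,j}r_{n+1}]=\P(X_{n+1,j}=1)\E[r_{n+1}\mid X_{n+1,j}=1]$, and get the reverse inequality with $g=-\nabla\ell$, which is the ``symmetric argument'' the paper invokes; your remark that this works because the proof of Proposition~\ref{prop:erm-grad-stability} bounds the deviation through $\|\cdot\|_\infty\le\|\cdot\|_2$ (so the bound does not depend on sign) supplies the detail the paper leaves implicit.
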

If $X_{n+1}$ represents a vector of group membership indicators, the adjusted prediction $f(X_{n+1})+X_{n+1}^\top\thetahat_n$ is approximately unbiased for all groups simultaneously, so long as they have sufficient probability of being observed.
This is essentially the batch version of the gradient equilibrium guarantees in~\cite{angelopoulos2025gradient}.
Much like in the case of gradient equilibrium, Corollary~\ref{cor:regularized-erm} extends to all generalized linear models, not just linear regression, and we omit the proof because it is essentially identical.

\subsection{Estimating the Stability Parameter}
\label{sec:beta-estimation}
Here, we seek to estimate upper bounds on the stability parameter.
We will do so using the calibration sample, and averaging across bootstrap replicates. 
Let $\hat{\P}_n$ be the empirical measure on $D_{1:n}=\{Z_i=(X_i,Y_i)\}_{i=1}^n$. For $b=1,\ldots,B$, draw a bootstrap dataset
\[
D^{(b)}=\bigl(Z^{(b)}_1,\ldots,Z^{(b)}_{n+1}\bigr)\sim\hat{\P}_n^{\,n+1},
\]
compute the estimate of $\beta$ on $D^{(b)}$, and average the bootstrapped results.
In well-behaved circumstances~\cite{vandervaart1996weak}, the bootstrap mean consistently estimates the corresponding population expectation, and we find this to be the case in our experiments.
That said, the validity of the bootstrap under subsampling is not trivial~\cite{politis1999subsampling} and represents an interesting avenue for further investigation.

The tightest way to estimate $\beta$ is directly from its definition, as opposed to applying the bounds from earlier in Section~\ref{sec:methods}.
Fix a reference algorithm $\A^*$ (as in Theorem~\ref{thm:main-intro}). For each bootstrap dataset $D^{(b)}$ compute
\begin{equation}
\label{eq:beta-def-bootstrap}
\Delta^{(b)}
\;:=\;
\frac{1}{n+1}\sum_{i=1}^{n+1}
\Bigl[\,
\ell\bigl(Z^{(b)}_i;\A(D^{(b)}_{-i})\bigr)
-
\ell\bigl(Z^{(b)}_i;\A^*(D^{(b)})\bigr)
\,\Bigr].
\end{equation}
We estimate $\beta$ by the positive part of the bootstrap mean
\[
\widehat{\beta}_{\rm def} \;:=\; \bigl(\,\overline{\Delta}\,\bigr)_+,
\qquad
\overline{\Delta}=\frac{1}{B}\sum_{b=1}^B \Delta^{(b)}.
\]

\section{Experiments}
\label{sec:experiments}
This section performs experiments to validate our method in several practical setups.
First, we run a real-data selective classification experiment on the Imagenet dataset.
Next, we run FDR and IOU control experiments on tumor segmentation.
Finally, we run a multigroup debiasing experiment of recidivism predictions.
Code to reproduce the experiments is available at \url{https://github.com/aangelopoulos/nonmonotonic-crc}.

We run experiments with three methods. 
The first, which we call CRC-C, is the conservative version of conformal risk control, run at an adjusted level $\alpha'=\alpha-\widehat{\beta}_{\rm def}$, where $\widehat{\beta}_{\rm def}$ is the estimated algorithmic stability from Section~\ref{sec:beta-estimation} via the bootstrap.
The second, which we simply call CRC, is the unadjusted version of conformal risk control, i.e.,~\eqref{eq:leftmost-root} with $\alpha$ set to be the desired risk level.
Finally, we run the Learn-then-Test procedure from~\cite{angelopoulos2025learn}, which provides a distribution-free guarantee that
\begin{equation}
    \P\left(\E\left[ \ell(X_{n+1}, Y_{n+1}; \thetahat) \big\vert D_{1:n} \right] \leq \alpha\right) \geq 1-\delta,
\end{equation}
where here we choose $\delta=0.1$.
In all the experiments below, LTT tends to be more conservative than CRC and CRC-C, which is expected given that it is a high-probability guarantee.
Nonetheless, we include it as the standard baseline for non-monotonic risk control, and as a contrast to our expectation control method.

\subsection{Selective Classification: Imagenet}
\begin{figure}[ht]
    \centering
    \includegraphics[width=\linewidth]{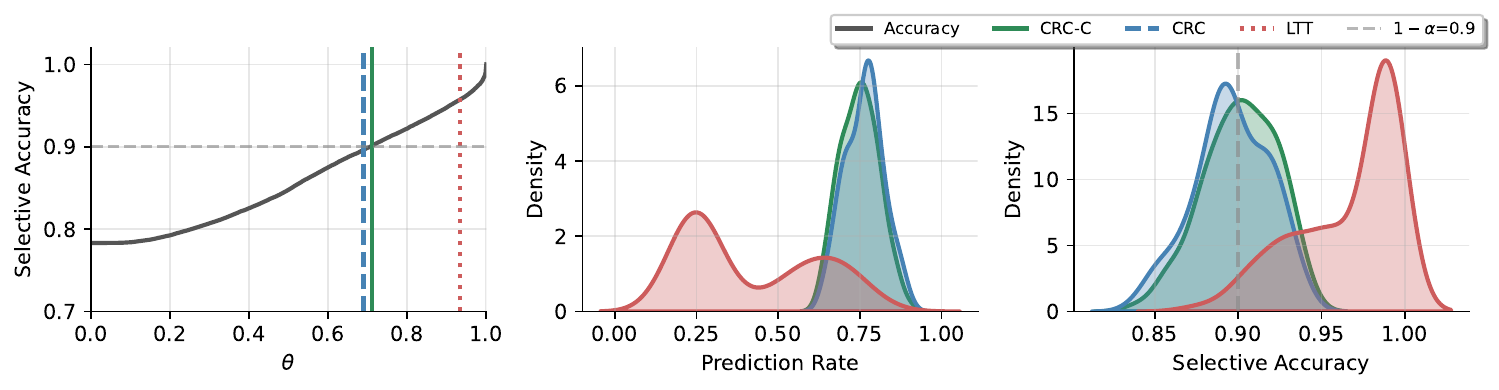}
    \caption{\textbf{Selective classification results on Imagenet.} The left panel shows the choices of thresholds from CRC, CRC-C, and LTT superimposed with the population accuracy curve. The middle panel shows a kernel density estimate of the prediction rate for all three methods over 100 resamplings of the data. The right panel shows a similar plot for the selective accuracy.}
    \label{fig:selective-imagenet}
\end{figure}
The Imagenet~\cite{deng2009imagenet} experiment follows the same setting as in~\cite{angelopoulos2023gentle}: we use a ResNet-152~\cite{he2016deep} as the base classifier, and the highest predicted probability from the model is used as $\hat{P}_i$ for each sample $i$.
The corresponding $\hat{Y}_1, \ldots, \hat{Y}_{n+1}$ represent class label predictions, and $Y_1, \ldots, Y_{n+1}$ represent the true labels.
We set $n=1000$.

To assess the algorithms' performance, we display the thresholds chosen by all three algorithms and the resulting selective accuracy and prediction rate in Figure~\ref{fig:selective-imagenet}.
The ideal algorithm issues as many predictions as possible subject to the accuracy constraint; that is, it should generally meet the accuracy constraint exactly, and maximize the prediction rate. CRC is the least conservative, LTT is the most (as it is a high-probability guarantee), and CRC-C is in the middle.
The estimated stability parameter is $\beta=0.006$, meaning CRC is essentially safe to use with no correction in this specific example.
Without correction, CRC is slightly liberal, with an accuracy a hair below $90\%$. CRC-C lands slightly above, while LTT is highly conservative and also higher variance.

\subsection{Smooth Losses: FDR in Tumor Segmentation}
\label{sec:fdr-tumor}
\begin{figure}[ht]
    \centering
    \includegraphics[width=\linewidth]{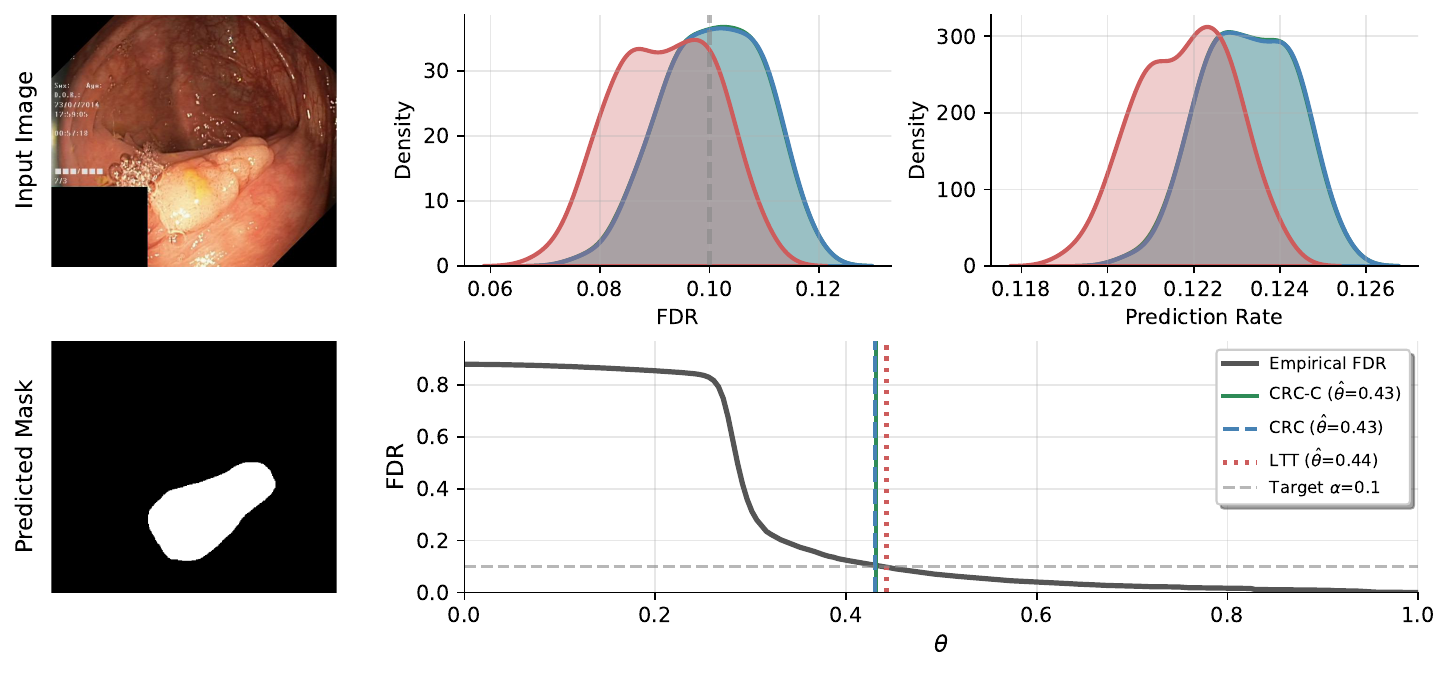}
    \caption{\textbf{FDR control results in polyp segmentation.} The top left and bottom left panels show examples of polyp images and predicted masks from the PraNet, respectively.
    The top middle and right panels show histograms of the FDR and prediction rate for CRC-C, CRC, and LTT, over 100 resamplings of the data, respectively. Notice that CRC and CRC-C overlap almost entirely.
    The bottom right panel shows selections of $\hat{\theta}$ made by the three methods on the calibration data superimposed with the true FDR calculated over the whole dataset.}
    \label{fig:fdr-tumor}
\end{figure}
\begin{figure}[ht]
    \centering
    \includegraphics[width=\linewidth]{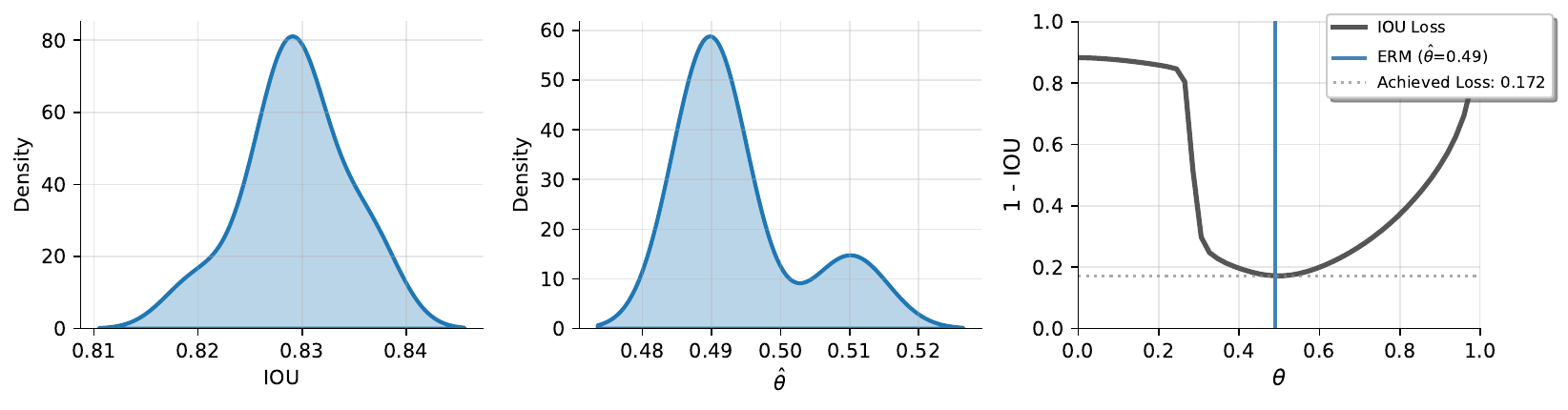}
    \caption{\textbf{IOU control results in polyp segmentation.} The left and middle panels show histograms of the IOU and segmentation parameter $\theta$ for ERM over 100 resamplings of the data, respectively.
    The right panel shows the selection of $\hat{\theta}$ superimposed with the true IOU calculated over the whole dataset.}
    \label{fig:iou-tumor}
\end{figure}
Next, we handle false discovery rate~\cite{benjamini1995controlling} (FDR) control on the PraNet~\cite{fan2020pranet} polyp segmentation task from~\cite{angelopoulos2023gentle}.
Each image $X_i$ is a $d \times d$ grid of pixels, and each label $Y_i$ is a $d \times d$ binary grid, with `1` signifying the presence of a tumor. The predictive model, $f(X_i) \in [0,1]$, outputs a probability that each pixel contains a tumor, which we binarize at level $\theta$ to form $\hat{y}(x;\theta)$. 
There are 1798 examples total in the dataset.
The FDR loss is then
\begin{equation}
    \ell(x,y;\theta) = 1 - \frac{|y \odot \hat{y}(x;\theta)|}{|\hat{y}(x;\theta)|},
\end{equation}
where the notation $|y|$ represents the sum over all values in $y$ and $\odot$ represents the element-wise product.
The FDR is non-monotonic in $\theta$, since both the numerator and denominator grow as $\theta$ shrinks. However, it tends to be a smooth and well-behaved loss function.
We generally seek to maximize discoveries subject to our FDR constraint, so we pick the largest set (smallest $\theta$) that controls our risk: $\A(D) = \inf\{\theta : \tfrac{1}{|D|}\sum_{(x,y) \in D} \ell(x,y;\theta) \leq \alpha\}$.

Figure~\ref{fig:fdr-tumor} shows empirical results from the FDR segmentation task with $n=500$.
Over 1000 bootstrap replicates, we found $\beta\approx0.00007$ for this task, indicating that correction is not needed for essentially any practical purpose due to the regularity of this loss function.
The results bear this out, with the results of CRC-C and CRC overlapping almost entirely, and LTT being more conservative.

\subsection{ERM Loss Guarantees: IOU in Tumor Segmentation}
Next, we handle intersection-over-union (IOU) control in the same tumor segmentation setting as Section~\ref{sec:fdr-tumor}.
The IOU is a measure of overlap common in the image segmentation literature, defined as the cardinality of the intersection of two segmentation masks divided by the cardinality of their union. 
In the perfect case, the IOU is 1, but it can degrade if the masks do not perfectly overlap.
Ideally, we want to maximize the IOU, making this problem a good candidate for empirical risk minimization.

We formally define the IOU loss function as
\begin{equation}
    \ell(x,y;\theta) = 1 - \frac{|y \odot \hat{y}(x;\theta)|}{|\max(y, \hat{y})|},
\end{equation}
where the denominator denotes the cardinality of the element-wise maximum.
Our algorithm $\cA$ minimizes the empirical risk with respect to this loss.

Results in Figure~\ref{fig:iou-tumor} indicate that the ERM procedure achieves an approximate minimizer. We calculated $\beta=0.000056$ for this problem over 100 bootstrap iterations.

\subsection{ERM Gradient Guarantees: Multigroup Debiasing of Recidivism Predictions}
\begin{figure}[ht]
    \centering
    \includegraphics[width=0.7\linewidth]{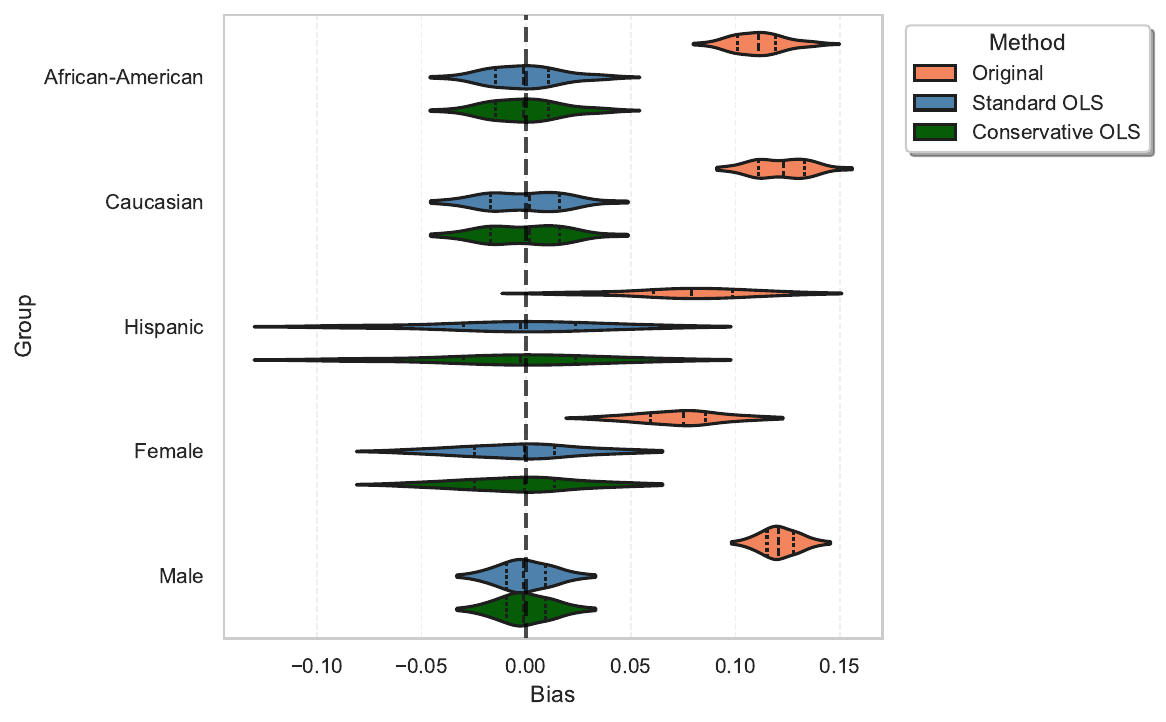}
    \caption{\textbf{Results of multigroup debiasing on COMPAS dataset.} Each row is a group, and the horizontal axis shows the amount of predictive bias. Each violin plot shows the bias over 100 resamplings of the data for each method.}
    \label{fig:compas}
\end{figure}
Finally, we use the ERM algorithm to issue unbiased recidivism predictions over racial groups and sexes on the Correctional Offender Management Profiling for Alternative Sanctions (COMPAS) Recidivism and Racial Bias dataset~\cite{angwin2022machine}, similarly to~\cite{angelopoulos2025gradient}. 
Focusing on arrest records from Broward County, Florida, this dataset tracks demographic details and criminal backgrounds alongside COMPAS-generated recidivism risk scores.
These scores are utilized to inform pretrial bail decisions, but the errors exhibit significant racial disparities. 
The labels $Y_i \in \{0,1\}$ are binary indicators of recidivism.
The covariates $X_i \in \{0,1\}^5$ are a vector of indicators for each group: Black, White, Hispanic, Male, and Female, respectively.
Note that these groups overlap.
The predicted recidivism rate $f_i$ is obtained by taking the COMPAS recidivism score, which lies on an integer scale from 1 to 10, and dividing it by 10. 
We preprocess the dataset to remove rare groups, and only consider the Caucasian, African-American, and Hispanic subgroups along with Male and Female sexes.
We take $n=1000$ from the full dataset of 6787 samples.

We aim to recalibrate $f_i$ with a regression as in Corollary~\ref{cor:unbiased-least-squares}.
We do this by running a post-hoc conservative OLS on the residual of the COMPAS predictor (with an optional linear term to provide a conservative guarantee): i.e., taking $\A$ to be ERM with
\begin{equation}
    \ell(z;\theta) = \frac{1}{2}\big(x^\top\theta - (y-f)\big)^2 + \beta \mathbf{1}_5^\top\theta.
\end{equation}
The implied guarantee says that the adjusted predictions, $f_i+X_i^\top\thetahat_n$, are unbiased for all races and sexes, as stated in Corollary~\ref{cor:unbiased-least-squares}. 
We conduct experiments evaluating the bias in Figure~\ref{fig:compas} using three methods: the raw COMPAS predictor, the standard OLS with no linear term, and the conservative OLS including the linear term.
In this case, $\beta=0.001839$, so there is almost no adjustment needed in this specific example.
The figure shows that the method effectively ensures all groups are unbiased.

\section{Discussion}
This paper showed that the validity of conformal risk control can be viewed as a consequence of algorithmic stability, and that therefore, all stable algorithms benefit from conformal guarantees.
This expands the scope of the field, allowing us to prove distribution-free bounds for arbitrary tasks.
Finding such tasks and corresponding guarantees would be an exciting avenue for future work.

The insights in this work also extend to full-conformal-type algorithms (see ~\cite{vovk2005algorithmic} and the extension to risk control in~\cite{angelopoulos2024note}). 
The main difference is that the definition of stability becomes leave-one-label-out, as opposed to leave-one-datapoint-out, and the algorithm can therefore involve imputing the missing label.
Otherwise, the main idea is the same.
Similarly, the technique extends readily to non-exchangeable distributions (see~\cite{tibshirani2019conformal,barber2022conformal} and Chapter 7 of~\cite{angelopoulos2024theoretical} for a review) via weighted exchangeability.
We leave further exploration of these important themes to future work.

It is also worth noting that the bounds developed in Section~\ref{sec:methods} are not necessarily comprehensive or tight.
It is possible that tighter or more general stability bounds exist for these algorithms.
As an example, even irregular loss functions can be made continuous and Lipschitz via randomized smoothing, leading to a natural extension of Proposition~\ref{prop:stability-smooth}.
Future work could develop a more unified theory for developing stability bounds for conformal-type algorithms, similarly to that on ERM.
Proving guarantees for the bootstrap estimate of $\beta$ would also be an exciting avenue for future work, as it is more practical than applying concentration arguments to the bounds in Section~\ref{sec:experiments}, hence why we use it in our experiments.

Finally, there is ample work to be done connecting this work with other ideas in statistics and machine learning.
There is a clear connection to the algorithmic stability literature that should be fully explored; additional areas of interest include multicalibration and its connections to our gradient-control guarantees, stochastic gradient descent as a potential algorithm, black-box tests of stability in the style of~\cite{kim2023black} to outline when conformal-type guarantees are (im)possible, and constrained optimization which may also lead to interesting risk-control guarantees; as an example, the parameter may be constrained to lie on the simplex.

\subsection*{Acknowledgments}
I would like to thank my friends and colleagues for their valuable comments and feedback on this paper: Rina Foygel Barber, Stephen Bates, Tiffany Ding, John Duchi, Yaniv Romano, Ryan Tibshirani, Vladimir Vovk, my brilliant wife Tijana Zrni\'c Angelopoulos, and the anonymous reviewers from the \emph{Philosophical Transactions of the Royal Society}.

\clearpage
\appendix

\section{Additional Proofs and Results}
\begin{proof}[Proof of Proposition~\ref{prop:general-stability}]
Since we have no assumptions on the loss, we will rely on a relatively coarse bound,
\begin{equation}
    \label{eq:basic-bound-generic}
    \E[R(\thetahat_{-i})] \leq \alpha + \epsilon + \P\left(R(\thetahat_{-i}) > \alpha + \epsilon \right) \text{ for all } \epsilon \geq 0.
\end{equation}
We will take the strategy of union-bounding the probability on the right-hand side and deriving the optimal $\epsilon$.
Fix $\epsilon\ge 0$, $i \in [n+1]$, and define the random set
\begin{equation}
\mathcal{B}_\epsilon\ :=\ 
\Bigl\{\theta\in\Theta_m:
\ R(\theta)>\alpha+\epsilon\ \text{ and }\ 
\hat R_{-i}(\theta)\le \alpha
\Bigr\}.
\end{equation}
Since $\hat R_{-i}(\thetahat_{-i})\le \alpha$,
\begin{equation}
\{R(\thetahat_{-i})>\alpha+\epsilon\}\ \subseteq\ \{\mathcal{B}_\epsilon\neq\varnothing\},
\end{equation}
hence
\begin{equation}
\P\bigl(R(\thetahat_{-i})>\alpha+\epsilon\bigr)\ \leq\ \P(\mathcal{B}_\epsilon\neq\varnothing).
\end{equation}
Now,
\begin{equation}
\{\mathcal{B}_\epsilon\neq\varnothing\}
=\bigcup_{\theta\in\Theta_m}
\Bigl\{R(\theta)>\alpha+\epsilon,\ \hat R_{-i}(\theta)\le \alpha\Bigr\},
\end{equation}
so by a union bound,
\begin{equation}
\P(\mathcal{B}_\epsilon\neq\varnothing)
\ \leq\ 
\sum_{\theta\in\Theta_m}
\P\Bigl(R(\theta)>\alpha+\epsilon,\ \hat R_{-i}(\theta)\le \alpha\Bigr).
\end{equation}
Fix $\theta\in\Theta_m$. If $R(\theta)\le \alpha+\epsilon$ the term is zero.  
If $R(\theta)>\alpha+\epsilon$, then
\begin{equation}
\P\Bigl(R(\theta)>\alpha+\epsilon,\ \hat R_{-i}(\theta)\le \alpha\Bigr)
\ \leq\ 
\P\bigl(\hat R_{-i}(\theta)-R(\theta)\le -\epsilon\bigr)
\ \leq\ 
\exp(-2n\epsilon^2),
\end{equation}
by Hoeffding’s inequality. Therefore,
\begin{equation}
\P(\mathcal{B}_\epsilon\neq\varnothing)
\ \leq\ 
(m+1)\exp(-2n\epsilon^2),
\end{equation}
which combined with~\eqref{eq:basic-bound-generic} means, for all $\epsilon \geq 0$,
\begin{equation}
\label{eq:general-bound-epsilon}
\E[R(\thetahat_{n})] \leq \alpha+\epsilon+(m+1)\exp(-2n\epsilon^2).
\end{equation}

Next, we optimize over $\epsilon \geq 0$.
The optimal $\epsilon^*$ of the right-hand side of~\eqref{eq:general-bound-epsilon} solves the first-order condition
\begin{align}
    \label{eq:general-bound-simplified}
    1+(m+1)(-4n\epsilon^*)\exp(-2n(\epsilon^*)^2)=0 & \Longleftrightarrow (m+1)\exp(-2n(\epsilon^*)^2) = \frac{1}{4n\epsilon^*} \\
    & \Longleftrightarrow (\epsilon^*)^2\exp(-4n(\epsilon^*)^2) = \frac{1}{16n^2(m+1)^2} \\
    & \Longleftrightarrow -4n(\epsilon^*)^2\exp(-4n(\epsilon^*)^2) = -\frac{1}{4n(m+1)^2} \\
    & \Longleftrightarrow -2ue^{-2u} = -\frac{1}{4n(m+1)^2} \\
    & \Longleftrightarrow -\frac{1}{2} W_{-1} \left( -\frac{1}{4n(m+1)^2}\right) = u \\
    & \Longleftrightarrow \epsilon^* = 
    \sqrt{-\frac{W_{-1} \left( -\frac{1}{4n(m+1)^2}\right)}{4n}}
\end{align}
where $u=2n(\epsilon^*)^2$.
Notice that~\eqref{eq:general-bound-simplified} proves the result.
To show it is $\tilde{\mathcal{O}}(n^{-1/2})$, it is known~\cite{loczi2020explicit} that $W_{-1}(x) \geq \frac{e\ln(-x)}{e-1}$ for $x \in [-1/e, 0)$, allowing us to write
\begin{equation}
    \epsilon^* \leq 
    \sqrt{-\frac{W_{-1} \left( -\frac{1}{4n(m+1)^2}\right)}{4n}} \leq \sqrt{\frac{e\ln(4n(m+1)^2)}{4(e-1)n}}.
\end{equation}
In the same range, $W_{-1}(x) \leq \ln(-x)-\ln(-\ln(-x))$, yielding
\begin{equation}
    \epsilon^* \geq \sqrt{\frac{\ln(4n(m+1)^2)+\ln(\ln(4n(m+1)^2))}{4n}}.
\end{equation}
Plugging in these bounds into~\eqref{eq:general-bound-simplified} gives the result in terms of big-O notation.
\end{proof}

\begin{proof}[Proof of Proposition~\ref{prop:erm-stable-loss}]
By~\eqref{eq:lipschitz-rho-simple}, we have that
\begin{equation}
    \frac{1}{n+1}\sum_{i=1}^{n+1}\left(\ell(Z_i;\thetahat_{-i}) - 
    \ell(Z_i;\thetahat_{n+1})\right) \leq \frac{1}{n+1}\sum_{i=1}^{n+1} \rho(Z_i)\| \thetahat_{n+1} - 
    \thetahat_{-i} \|_2.
\end{equation}
The remainder of this proof will show that the following upper-bound holds:
\begin{equation}
    \label{eq:erm-rho-ub}
    \frac{1}{n+1}\sum_{i=1}^{n+1} \rho(Z_i)\| \thetahat_{n+1} - 
    \thetahat_{-i} \|_2 \leq \frac{2}{\lambda (n+1)^2}\sum_{i=1}^{n+1}\rho(Z_i)^2,
\end{equation}
after which taking expectations reveals that
\begin{equation}
    \E\left[\frac{1}{n+1}\sum_{i=1}^{n+1}\left(\ell(Z_i;\thetahat_{-i}) - 
    \ell(Z_i;\thetahat_{n+1})\right)\right] \leq \frac{2}{\lambda (n+1)^2}\sum_{i=1}^{n+1}\E\left[\rho(Z_i)^2\right] = \frac{2\E\left[\rho(Z_1)^2\right]}{\lambda (n+1)},
\end{equation}
proving the result.

Towards proving~\eqref{eq:erm-rho-ub}, we will upper-bound $\|\thetahat_{n+1} - \thetahat_{-i}\|$ in terms of $\rho(Z_i)$. We will make use of the fact that for any $g\in\partial_\theta \ell(z;\theta)$,
\begin{equation}
    \label{eq:subgradient-rho}
    \ell(z;\theta)+\|g\|_2^2 \leq \ell(z;\theta+g) \leq \ell(z;\theta)+\rho(z)\|g\|_2 \implies \|g\|_2 \leq \rho(z).
\end{equation}

Define the (strongly convex) objectives
\begin{equation}
F(\theta)\ :=\ \hat R_{D_{1:n+1}}(\theta)+\frac{\lambda}{2}\|\theta\|_2^2,
\qquad
F_{-i}(\theta)\ :=\ \hat R_{D_{-i}}(\theta)+\frac{\lambda}{2}\|\theta\|_2^2.
\end{equation}
Since $0\in\partial F(\thetahat_{n+1})$, there exist $g_1,\ldots,g_{n+1}$ with $g_j\in\partial_\theta \ell(Z_j;\thetahat_{n+1})$ such that
\begin{equation}
\label{eq:opt-full}
\frac{1}{n+1}\sum_{j=1}^{n+1} g_j\ +\ \lambda \thetahat_{n+1}\ =\ 0.
\end{equation}
Define
\begin{equation}
s_{-i}\ :=\ \frac{1}{n}\sum_{j\neq i} g_j\ +\ \lambda \thetahat_{n+1}.
\end{equation}
Since $\hat R_{D_{-i}}$ is an average of convex functions, $\frac{1}{n}\sum_{j\neq i}g_j\in\partial \hat R_{D_{-i}}(\thetahat_{n+1})$, hence $s_{-i}\in\partial F_{-i}(\thetahat_{n+1})$.
Using~\eqref{eq:opt-full},
\begin{equation}
s_{-i}
=
\frac{1}{n}\sum_{j\neq i} g_j\ -\ \frac{1}{n+1}\sum_{j=1}^{n+1} g_j
=
\frac{1}{(n+1)n}\sum_{j\neq i} g_j\ -\ \frac{1}{n+1}g_i.
\end{equation}

We now use the strong convexity of $F_{-i}$.
Take any $u,v\in\R^d$, and any $a\in\partial F_{-i}(u)$ and $b\in\partial F_{-i}(v)$.
Then $a=a_0+\lambda u$ and $b=b_0+\lambda v$ for some $a_0\in\partial\hat{R}_{-i}(u)$ and $b_0\in\partial\hat{R}_{-i}(v)$.
By monotonicity of the subdifferential operator, $\langle a_0-b_0,\ u-v\rangle\ \ge\ 0$, so
\begin{equation}
\label{eq:strong-mono}
\langle a-b,\ u-v\rangle
=
\langle a_0-b_0,\ u-v\rangle + \lambda\|u-v\|_2^2
\ \ge\ \lambda\|u-v\|_2^2.
\end{equation}
Applying~\eqref{eq:strong-mono} with $u=\thetahat_{n+1}$, $v=\thetahat_{-i}$, $a=s_{-i}\in\partial F_{-i}(\thetahat_{n+1})$, and $b=0\in\partial F_{-i}(\thetahat_{-i})$ gives
\begin{equation}
\lambda\|\thetahat_{n+1}-\thetahat_{-i}\|_2^2\ \le\ \langle s_{-i},\ \thetahat_{n+1}-\thetahat_{-i}\rangle
\ \le\ \|s_{-i}\|_2\,\|\thetahat_{n+1}-\thetahat_{-i}\|_2,
\end{equation}
hence, by~\eqref{eq:subgradient-rho}, 
\begin{align}
\label{eq:param-shift}
\rho(Z_i)\|\thetahat_{n+1}-\thetahat_{-i}\|_2\ & \leq \frac{\rho(Z_i)}{\lambda}\|s_{-i}\|_2 \\
& = \frac{\rho(Z_i)}{\lambda}\left\|\frac{1}{(n+1)n}\sum_{j\neq i} g_j\ -\ \frac{1}{n+1}g_i\right\|_2 \\
& \leq \frac{1}{\lambda (n+1)}\left(\rho(Z_i)^2+\frac{1}{n}\sum_{j\neq i}\rho(Z_i)\rho(Z_j)\right).
\end{align}
Averaging over $i=1,\ldots,n+1$ yields
\begin{align}
\label{eq:avg-diff}
\frac{1}{n+1}\sum_{i=1}^{n+1}\rho(Z_i)\|\thetahat_{n+1}-\thetahat_{-i}\|_2\
&\le
\frac{1}{\lambda (n+1)}\left(
\frac{1}{n+1}\sum_{i=1}^{n+1}\rho(Z_i)^2
+\frac{1}{(n+1)n}\sum_{i\neq j}\rho(Z_i)\rho(Z_j)
\right).
\end{align}
Let $S:=\sum_{i=1}^{n+1}\rho(Z_i)$ and $Q:=\sum_{i=1}^{n+1}\rho(Z_i)^2$.
Then $\sum_{i\neq j}\rho(Z_i)\rho(Z_j)=S^2-Q$, so the second term in~\eqref{eq:avg-diff} equals $(S^2-Q)/((n+1)n)$.
By Cauchy--Schwarz, $S^2\le (n+1)Q$, hence
\begin{equation}
\frac{S^2-Q}{(n+1)n}\ \le\ \frac{(n+1)Q-Q}{(n+1)n}\ =\ \frac{Q}{n+1}.
\end{equation}
Plugging this into~\eqref{eq:avg-diff} gives the desired bound in~\eqref{eq:erm-rho-ub}.
\end{proof}

\begin{proof}[Proof of Proposition~\ref{prop:erm-grad-stability}]
Define
\begin{equation}
F_{n+1}(\theta) = \frac{1}{n+1}\sum_{j=1}^{n+1} \ell(z_j;\theta) + \frac{\lambda}{2}\|\theta\|_2^2,
\qquad
\hat{\theta}_{n+1}\in\arg\min_{\theta}F_{n+1}(\theta)=\A(D_{1:n+1}),
\end{equation}
and $F_{-i},\hat{\theta}_{-i}$ accordingly for all $i$. The $\mu$-strong convexity of the risks implies that $F_{n+1}$ is $(\mu+\lambda)$-strongly convex.
This further implies that for any $u\in\R^d$, $\|\hat{\theta}_{n+1}-u\|_2 \leq \frac{1}{\mu+\lambda}\|\nabla F_{D_{1:n+1}}(u)\|_2$.
Applying this with $u=\hat{\theta}_{-i}$ gives
\begin{equation}
\|\hat{\theta}_{n+1}-\hat{\theta}_{-i}\|_2
\leq \frac{1}{\mu+\lambda}\|\nabla F_{D_{1:n+1}}(\hat{\theta}_{-i})\|_2 = \frac{1}{\mu+\lambda}\|\nabla F_{D_{1:n+1}}(\hat{\theta}_{-i}) - \nabla F_{-i}(\hat{\theta}_{-i})\|_2,
\end{equation}
where in the last step we used that $0 = \nabla F_{-i}(\hat{\theta}_{-i})$. 
Next, we use the fact that 
\begin{multline}
\nabla F_{D_{1:n+1}}(\hat{\theta}_{-i}) - \nabla F_{-i}(\hat{\theta}_{-i}) = \frac{1}{n+1}\nabla\ell(Z_i;\hat{\theta}_{-i}) + \Big(\frac{1}{n+1}-\frac{1}{n}\Big)\sum_{j\neq i} \nabla\ell(Z_j;\hat{\theta}_{-i}) \\
= \frac{1}{n+1}\nabla\ell(Z_i;\hat{\theta}_{-i}) - \frac{1}{n(n+1)}\sum_{j\neq i} \nabla\ell(Z_j;\hat{\theta}_{-i}),
\end{multline}
so $\|\hat{\theta}_{n+1}-\hat{\theta}_{-i}\|_2 \leq \tfrac{1}{\mu + \lambda}\left(\tfrac{1}{n+1}\|\nabla\ell(Z_i;\hat{\theta}_{-i})\|_2 + \tfrac{1}{n(n+1)}\left\|\sum_{j\neq i} \nabla\ell(Z_j;\hat{\theta}_{-i})\right\|_2\right)$.

Next, picking $i=n+1$ and using $\rho(z)$-Lipschitzness,
\begin{multline}
\|\nabla\ell(Z_{n+1};\thetahat_n)-\nabla\ell(Z_{n+1};\hat{\theta}_{n+1})\|_2
\leq
\rho(Z_{n+1})\|\hat{\theta}_{n+1}-\thetahat_n\|_2
\\
\leq \frac{\rho(Z_{n+1})\|\nabla\ell(Z_{n+1};\thetahat_n)\|_2}{(\mu+\lambda)(n+1)} + \frac{\rho(Z_{n+1})\|\sum_{j=1}^{n} \nabla\ell(Z_j;\thetahat_n)\|_2}{(\mu+\lambda) n(n+1)}.
\end{multline}
Taking expectations gives
\begin{multline}
\E\left[\|\nabla\ell(Z_{n+1};\thetahat_n)-\nabla\ell(Z_{n+1};\hat{\theta}_{n+1})\|_2\right] \\
\leq \frac{\E\left[\rho(Z_{n+1})\|\nabla\ell(Z_{n+1};\thetahat_n)\|_2\right]}{(\mu+\lambda)(n+1)} + \frac{\E\left[\rho(Z_{n+1})\|\tfrac{1}{n}\sum_{j=1}^{n} \nabla\ell(Z_j;\thetahat_n)\|_2\right]}{(\mu+\lambda) (n+1)}.
\end{multline}
Finally, using norm inequalities gives
\begin{align}
    & \E\left[\nabla\ell(Z_{n+1};\thetahat_n) - \nabla\ell(Z_{n+1};\hat{\theta}_{n+1})\right] \\
    \preceq &\| \E\left[\nabla\ell(Z_{n+1};\thetahat_n) - \nabla\ell(Z_{n+1};\hat{\theta}_{n+1})\right] \|_{\infty} \mathbf{1}_d \\
    \preceq &\E\left[\|\nabla\ell(Z_{n+1};\thetahat_n) - \nabla\ell(Z_{n+1};\hat{\theta}_{n+1})\|_2\right] \mathbf{1}_d  \\
    \preceq & \frac{\E\left[\rho(Z_{n+1})\|\nabla\ell(Z_{n+1};\thetahat_n)\|_2\right] + \E\left[\rho(Z_{n+1})\|\tfrac{1}{n}\sum_{j=1}^{n} \nabla\ell(Z_j;\thetahat_n)\|_2\right]}{(\mu+\lambda)(n+1)}\mathbf{1}_d,
\end{align}
implying the desired conclusion.
\end{proof}
\begin{proof}[Proof of Corollary~\ref{cor:regularized-erm}]
    First notice that
    \begin{equation}
        \frac{1}{n+1}\sum\limits_{i=1}^{n+1}\nabla\ell(X_{i}, Y_{i}; \thetahat_{n+1}) + \lambda\thetahat_{n+1} = 0,
    \end{equation}
    so by exchangeability,
    \begin{equation}
        \E\left[\nabla\ell(Z_{n+1},\thetahat_{n+1}) + \lambda\thetahat_{n+1}\right] = \E\left[ \frac{1}{n+1}\sum\limits_{i=1}^{n+1}\nabla\ell(X_{i}, Y_{i}; \thetahat_{n+1}) + \lambda\thetahat_{n+1} \right] = 0.
    \end{equation}
    Combining Proposition~\ref{prop:erm-grad-stability} and Theorem~\ref{thm:main-high-d} therefore gives that
    \begin{equation}
        \E\left[\nabla\ell(Z_{n+1},\thetahat_{n})\right] \preceq \beta - \lambda\E[\thetahat_{n}].
    \end{equation}
\end{proof}
\begin{proof}[Proof of Proposition~\ref{prop:regularized-erm-conservative}]
    The same argument as Corollary~\ref{cor:regularized-erm} applied to the loss function $\ell(z;\theta) + \gamma\mathbf{1}_d^\top\theta$ gives that
    \begin{equation}
        \E\left[\nabla\ell(X_{n+1}, Y_{n+1}; \thetahat_{n}) \right] \preceq \beta_\gamma - \lambda\E[\thetahat_n] - \gamma\mathbf{1}_d.
    \end{equation}
    The smallest scalar $\gamma$ that makes the right-hand side zero is
    \begin{equation}
        \gamma
        =
        \frac{
        \displaystyle
        \frac{
        \E\left[\rho(Z_{n+1})\|\nabla\ell(Z_{n+1};\thetahat_{n}\|_2\right]
        +
        \E[\rho(Z_{n+1})]\,
        \E\left[\left\|\frac{1}{n}\sum_{j=1}^n \nabla\ell(Z_j;\thetahat_{n})\right\|_2\right]
        }{(\mu+\lambda)(n+1)}
        +
        \lambda\left\|\E[\thetahat_{n}]\right\|_\infty
        }{
        \displaystyle
        1 - \frac{2\,\E[\rho(Z_{n+1})]}{(\mu+\lambda)(n+1)}
        },
    \end{equation}
    provided $(\mu + \lambda)(n+1) > 2 \E[\rho(Z_{n+1})]$.
\end{proof}

\begin{proof}[Proof of Corollary~\ref{cor:unbiased-least-squares}]
    Let $\thetahat_n=\A(D_{1:n})$. Here $\lambda=0$ and $\rho(Z_{n+1}) = d$, so Corollary~\ref{cor:regularized-erm} gives
    \begin{multline}
        \E[X_{n+1}(f(X_{n+1})+X_{n+1}^\top\thetahat_n - Y_{n+1})] \preceq \frac{d\E[\|X_{n+1}(f(X_{n+1})+X_{n+1}^\top\thetahat_n - Y_{n+1})\|_2]}{\mu(n+1)}\mathbf{1}_d \\
        \preceq \frac{d^{\frac{3}{2}} \E[|f(X_{n+1})+X_{n+1}^\top\thetahat_n - Y_{n+1}|]}{\mu(n+1)}\mathbf{1}_d. 
    \end{multline}
    Since $X_{n+1} \in \{0,1\}^d$, for the $j$th coordinate,
    \begin{equation}
        \E[X_{n+1,j}(f(X_{n+1})+X_{n+1}^\top\thetahat_n - Y_{n+1})] = \P(X_{n+1,j}=1)\E[(f(X_{n+1})+X_{n+1}^\top\thetahat_n - Y_{n+1})\mid X_{n+1,j}=1].
    \end{equation}
    Therefore, the earlier inequality implies that
    \begin{equation}
        \E[Y_{n+1}\mid X_{n+1,j}=1] \geq \E[f(X_{n+1})+X_{n+1}^\top\thetahat_n \mid X_{n+1,j}=1] - \frac{d^{\frac{3}{2}}\E[|f(X_{n+1})+X_{n+1}^\top\thetahat_n - Y_{n+1}|]}{\P(X_{n+1,j}=1)\mu(n+1)}.
    \end{equation}
    The symmetric argument (with all inequalities flipped, beginning with Theorem~\ref{thm:main-high-d}) gives that
    \begin{equation}
        \E[Y_{n+1}\mid X_{n+1,j}=1] \leq \E[f(X_{n+1})+X_{n+1}^\top\thetahat_n \mid X_{n+1,j}=1] + \frac{d^{\frac{3}{2}}\E[|f(X_{n+1})+X_{n+1}^\top\thetahat_n - Y_{n+1}|]}{\P(X_{n+1,j}=1)\mu(n+1)}.
    \end{equation}
\end{proof}

\bibliographystyle{alpha}
\bibliography{bibliography}

\end{document}

%% file: macros.tex
\usepackage{amssymb,amsmath,amsthm,bbm}
\usepackage{verbatim,float,url,dsfont}
\usepackage{graphicx,subcaption,psfrag}
\usepackage{algorithm,algorithmic}
\usepackage{mathtools,enumitem}
\usepackage{multirow}
\usepackage{ragged2e}
\usepackage{xr-hyper}
\usepackage{array}

\usepackage[colorlinks=true,citecolor=blue,urlcolor=blue,linkcolor=blue]{hyperref}
\usepackage[margin=1in]{geometry}
\usepackage[authoryear]{natbib}

\usepackage[utf8]{inputenc} 
\usepackage[T1]{fontenc}    
\usepackage{booktabs}       
\usepackage{nicefrac}         
\usepackage{microtype}      

\usepackage[dvipsnames]{xcolor} 
\usepackage{pifont} 
\usepackage{etoc} 
\usepackage{tikz} 
\usepackage{pgfplots}  
\pgfplotsset{compat=1.16}  

\ifdefined\TimesFont 
\usepackage{times} 
\fi

\ifdefined\ParSkip 
\usepackage{parskip} 
\fi

\newtheorem{theorem}{Theorem}

\newtheorem{corollary}{Corollary}
\newtheorem{proposition}{Proposition}
\theoremstyle{definition}

\theoremstyle{definition}

\makeatletter
\newcommand*\rel@kern[1]{\kern#1\dimexpr\macc@kerna}
\newcommand*\widebar[1]{%
  \begingroup
  \def\mathaccent##1##2{%
    \rel@kern{0.8}%
    \overline{\rel@kern{-0.8}\macc@nucleus\rel@kern{0.2}}%
    \rel@kern{-0.2}%
  }%
  \macc@depth\@ne
  \let\math@bgroup\@empty \let\math@egroup\macc@set@skewchar
  \mathsurround\z@ \frozen@everymath{\mathgroup\macc@group\relax}%
  \macc@set@skewchar\relax
  \let\mathaccentV\macc@nested@a
  \macc@nested@a\relax111{#1}%
  \endgroup
}
\makeatother

\DeclareMathOperator*{\argmin}{argmin}

\def\R{\mathbb{R}}

\def\E{\mathbb{E}}
\def\P{\mathbb{P}}

\def\cA{\mathcal{A}}

\def\cX{\mathcal{X}}
\def\cY{\mathcal{Y}}
\def\cZ{\mathcal{Z}}

\def\ind#1{\mathds{1}\left\{#1\right\}}

